\documentclass{article}

\usepackage{amsthm} 
\usepackage{amssymb} 
\usepackage{graphicx} 
\usepackage{hyperref} 

\usepackage{lineno}

\newtheorem{theorem}{Theorem}

\theoremstyle{definition}

\newcommand\unnumberedfootnote[1]{%
  \begingroup
  \renewcommand\thefootnote{}%
  \footnotetext{#1}%
  \endgroup
}

\begin{document}

\title{Optimally Covering Large Triangles with Homothetic Unit Triangles}

\unnumberedfootnote{MSC: 52C15}

\author{John M. Boyer \footnote{Email: jboyer@acm.org; ORCID: 0000-0002-4755-5535}}
\maketitle

\begin{abstract}
We answer an open problem in the \emph{American Mathematical Monthly} about covering large triangles. Given a triangle $T$ of any triangular shape with a selected side length between $n \in \mathbb{N}$ and $n+1$, Baek and Lee proved that $T$ could not be covered with $n^2+1$ homothetic unit triangles (with the selected side of length 1). Letting $T_{n+d}$ denote a triangle with selected side length $n + d$ with $d \in (0, 1)$, Baek and Lee extended their proof to establish upper bounds for $d$ above which a $T_{n+d}$ cannot be covered with $n^2+2$ or $n^2+3$ homothetic unit triangles. Then, they showed that these bounds are tight based on analyses of a method by Conway and Soifer for the $n^2+2$ case and their own method for the $n^2+3$ case. Baek and Lee stated as an open problem the need to find tight upper bounds for the $n^2 + k$ cases for $4 \le k \le 2n$. We extend the Baek and Lee proof to establish upper bounds for those higher cases, and we show the upper bounds are tight by presenting two new triangle covering methods for the odd and even cases of $k$ that meet the upper bounds, as well as an optimal consolidated method that uses whichever of the two will cover a given $T_{n+d}$ with the fewest homothetic unit triangles.
\end{abstract}

\noindent{\textbf{Keywords:}{Covering methods, triangles, two-dimensional}}

\section{Introduction}\label{sec:Intro}

Let $T_s$ denote a triangle with selected side length $s \ge 1$. Given a triangle $T_1$ with the selected side of unit length, i.e., a \textbf{\emph{unit triangle}}, it is well-known that a larger triangle $T_n$ with selected side length $n \in \mathbb{N}$ can be covered with $n^2$ homothetic unit triangles. An example covering for an equilateral triangle $T_3$ is given in Figure~\ref{fig:CoverTriangleTn}(a). The unit triangles with a horizontal leg at the base and an acme above it are called \textbf{\emph{up-triangles}}, such as the one labeled `U' in Figure~\ref{fig:CoverTriangleTn}(a). Similarly, unit triangles with a horizontal leg at the top and a nadir below it are called \textbf{\emph{down-triangles}}, such as the one labeled `D' in Figure~\ref{fig:CoverTriangleTn}(a). The covering has $n$ rows, numbered 1 to 3 in Figure~\ref{fig:CoverTriangleTn}(a). Each row $k$ contains $k$ up-triangles, and each up-triangle except the last is succeeded in the row by one of $k-1$ down-triangles. Each row $k$ has the shape of a trapezoid with a top width of $k - 1$, diagonals of length 1, and a base width of $k$ units (the top row is triangular, which is degenerately trapezoidal).

The sense of direction is imparted only to aid exposition. Similarly, we focus on expressing results for equilateral triangles, which also presents no loss of generality because of the requirement of using homothetic unit triangles. As Soifer expressed it, ``for \emph{every} non-equilateral triangle $T$ \ldots an appropriate affine transformation maps the equilateral triangle and its covering \ldots into $T$"~\cite[p. 34]{Soifer}. 

Given a triangle $T_{n+d}$ with $n \in \mathbb{N}$ and $d \in (0, 1)$, a covering with only $n^2$ homothetic unit triangles leaves a coverage gap, such as the long, thin, darkened trapezoid in Figure~\ref{fig:CoverTriangleTn}(b). One way to deal with the coverage gap is to use the na\"{\i}ve covering method, which covers a $T_{n+d}$ with a covering of $T_{n+1}$. The na\"{\i}ve covering method requires $n^2+2n+1$ homothetic unit triangles. However, when $d$ is a small positive value $\varepsilon$ (which is vanishingly small as $n$ increases), then Conway and Soifer showed two methods to cover $T_{n+\varepsilon}$ using only $n^2+2$ homothetic unit triangles~\cite{ConwaySoiferGeom, ConwaySoifer}.

\begin{figure}[!ht]
\centering
\includegraphics[width=4.7in,height=2.4in]{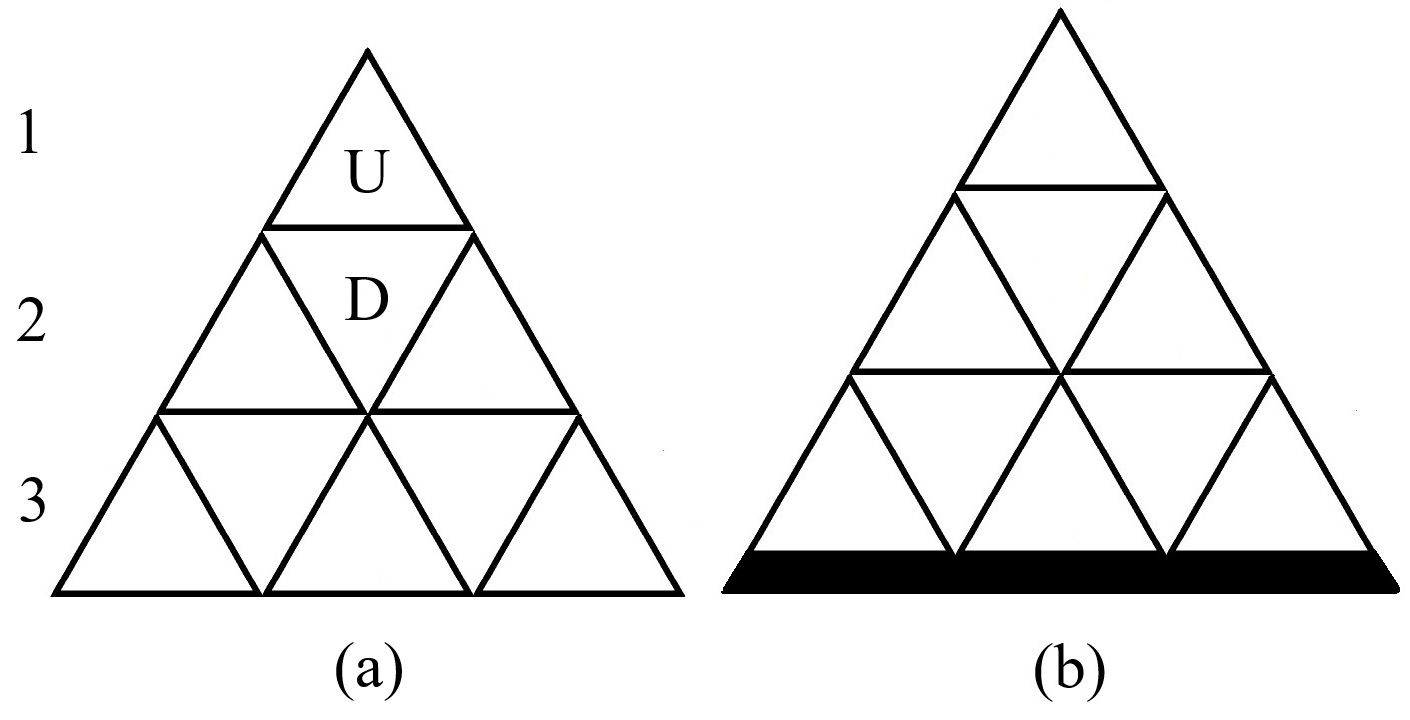}
\caption{(a) The well-known covering of a large triangle $T_n$ with $n \in \mathbb{N}$ (for $n=3$). (b) Depiction of the trapezoidal gap in coverage of a $T_{n+d}$ with only $n^2$ homothetic unit triangles.
\label{fig:CoverTriangleTn}}
\end{figure}

According to~\cite{Soifer}, Soifer's originating question was whether the darkened trapezoid in Figure~\ref{fig:CoverTriangleTn}(b) could, if its area were low enough, be covered by rearranging the unit triangles of a $T_{n}$ covering and then adding 1 extra unit triangle. The solutions that Conway and Soifer presented in~\cite{ConwaySoiferGeom, ConwaySoifer} added $2$ extra unit triangles, and Baek and Lee~\cite{BaekLee} recently proved that it is not possible to use only $n^2+1$ homothetic unit triangles to cover any triangle larger than $T_n$. Note that Karabash and Soifer~\cite{KarabashSoiferGeom} showed that some shapes of large non-equilateral triangles could be covered with $n^2+1$ similar unit triangles, and this approach was extended to all shapes of non-equilateral triangles in~\cite{BoyerNonequilateralTriangles}. However, the methods in~\cite{BoyerNonequilateralTriangles, KarabashSoiferGeom} do not violate the proof by Baek and Lee~\cite{BaekLee} because the methods in~\cite{BoyerNonequilateralTriangles, KarabashSoiferGeom} complete the coverings using only one extra unit triangle that is similar to but not homothetic to the large triangle.

One aspect of the methods in~\cite{ConwaySoiferGeom, ConwaySoifer} is that they are part of existential proofs only, so as $n$ increases, there are vanishingly few large triangles on which they succeed. This is why the methods are described as covering triangles of the form $T_{n+\varepsilon}$, since $\varepsilon$ is a positive value that may become vanishingly small as $n$ increases. Recently, it was shown that one of the two methods by Conway and Soifer could be extended in a simple, natural way to use $n^2$ plus an even number of additional unit triangles, from 2 up to $2n$, to cover larger triangles with $d$ up to $\frac{1}{2}$~\cite{BoyerAlwaysCoverTriangles}. While this generalization could cover \emph{half} of all large triangles more efficiently than by the na\"{\i}ve covering method, it was left as an open question in~\cite{BoyerAlwaysCoverTriangles} to determine whether the method was optimal or if there exists a more efficient method. Similarly, after extending their proof technique to establish tight upper bounds for $d$ when using $n^2+2$ or $n^2+3$ homothetic unit triangles to cover a $T_{n+d}$, Baek and Lee stated as an open problem the need to find tight upper bounds for the $n^2 + k$ cases for $4 \le k \le 2n$~\cite{BaekLee}.

In this paper, we answer these two open questions from ~\cite{BaekLee, BoyerAlwaysCoverTriangles}. We begin in Section~\ref{sec:GeneralizedConway} with a review of the first method by Conway and Soifer as well as its generalization in~\cite{BoyerAlwaysCoverTriangles}. Then, in Section~\ref{sec:NewTriangleCoveringMethod_Even}, we present a new triangle covering method for the even cases of $k$, i.e., a new method for covering a $T_{n+d}$ with $n^2 + 2p$ homothetic unit triangles for each value $1 \le p \le n \in \mathbb{N}$. We show that the method is superior to the one in~\cite{BoyerAlwaysCoverTriangles} with a higher upper bound for $d$, culminating in the ability to \emph{almost always} cover triangles more efficiently than the na\"{\i}ve covering method using only $n^2 + 2n - 2c$ homothetic unit triangles for any constant $c \in \mathbb{N}_0$. In Section~\ref{sec:OptimalEvenCases}, we review the Baek and Lee proof for the $n^2+2$ case and extend it to establish upper bounds for all $n^2 + 2p$ cases with $1 \le p \le n \in \mathbb{N}$. We also prove that these upper bounds are tight by showing that they are met by the new triangle covering method for even cases. Next, we transition to the odd cases of $k$, beginning with a review in Section~\ref{sec:ThreeExtraUnitTriangles} of the Baek and Lee method and proof for the $n^2+3$ case. Although their method is optimal for the $n^2+3$ case, we observe that it generalizes to a good but not optimal method for any higher odd cases of $k$. In Section~\ref{sec:NewTriangleCoveringMethod_Odd}, we present a new triangle covering method for the odd cases of $k$, i.e., for covering a $T_{n+d}$ with $n^2 + 2p + 1$ homothetic unit triangles for each value $1 \le p < n \in \mathbb{N}$. Then, in Section~\ref{sec:OptimalOddCases}, we extend the Baek and Lee proof for the $n^2+3$ case to establish upper bounds for all $n^2 + 2p + 1$ cases with $1 \le p < n \in \mathbb{N}$, and we prove that these upper bounds are tight by showing that they are met by the new triangle covering method for odd cases. In Section~\ref{sec:ConsolidatedMethod}, we construct an optimal triangle covering method from the methods for the even and odd cases by defining how to decide which will use the fewest unit triangles to cover a given $T_{n+d}$. Finally, Section~\ref{sec:ConclusionFutureWork} presents concluding remarks and future work.

\section{Generalized First Conway-Soifer Method}\label{sec:GeneralizedConway}

The first Conway-Soifer triangle covering method in~\cite{ConwaySoiferGeom, ConwaySoifer} begins by placing a $T_n$ covering within a given $T_{n+d}$ so that the coverage gap appears below the $T_n$, as depicted in Figure~\ref{fig:CoverTriangleTn}(b). Then, the gap is covered by adding 2 extra unit triangles to the bottom trapezoidal row and using an \textbf{\emph{up-left slide operation}} on that row to rearrange the unit triangles. First, the row of unit triangles is placed along the base and the left diagonal of the $T_{n+d}$. Then, each down-triangle in the row is slid upward and leftward along the right-leg slope of its preceding up-triangle, and each succeeding unit triangle in the row is slid leftward to provide continuous coverage out to the right legs of the rightmost down- and up-triangles. The \textbf{\emph{slide value}} used in~\cite{ConwaySoiferGeom, ConwaySoifer} is $d = \varepsilon$ units, so the unit triangles cover a trapezoid with diagonal length of $1 + \varepsilon$ at the bottom of a $T_{n+\varepsilon}$~\cite[Figure 1]{ConwaySoifer}.

Due to the leftward sliding that ensures continuity of coverage, the original $2n-1$ unit triangles from the bottom trapezoidal row of a $T_n$ covering do not span the width of a $T_n$ much less a larger triangle. The 2 extra unit triangles increase the width of coverage. They are arranged in the same down-triangle then up-triangle configuration as the original $2n-1$ unit triangles. As shown for $n = 8$ in Figure 1 of~\cite{ConwaySoifer} and for $n = 12$ in Figure 1 of~\cite{ConwaySoiferGeom, Soifer}, the 2 extra unit equilateral triangles are sufficient to fully cover the given larger equilateral triangle, because $d = \varepsilon$ is small enough in those figures. The operation of the method is also shown in Figure~\ref{fig:CoverTriangleSideNPlus}, except $n$ and $d$ were selected to show that a larger $d$ value can cause a coverage gap, labeled $\alpha$, to occur.

\begin{figure}[!ht]
\centering

\includegraphics[width=3.6in,height=3in]{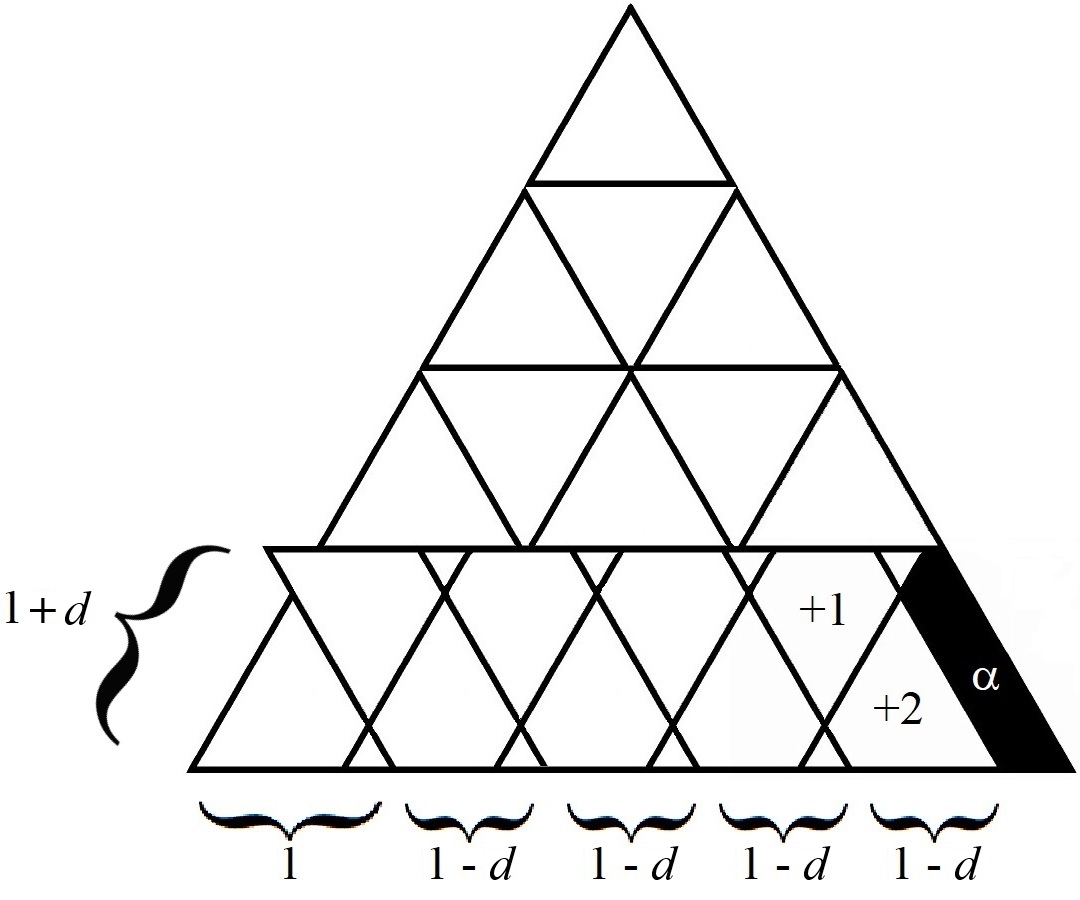}

\caption{The behavior of the first method in~\cite{ConwaySoiferGeom, ConwaySoifer} is shown using $n^2 + 2$ unit triangles to cover all of a $T_{n + d}$ except for a coverage gap labeled $\alpha$ that can occur if $d$ is too large.
\label{fig:CoverTriangleSideNPlus}}
\end{figure}

When the bases of the $n+1$ up-triangles in the bottom trapezoidal row are arranged as depicted in Figure~\ref{fig:CoverTriangleSideNPlus} (and Figure 1 of~\cite{ConwaySoiferGeom, ConwaySoifer, Soifer}), then they can span the base of a $T_{n + d}$ if $d$ is small enough. However, each up-triangle after the first only contributes a width of $(1 - d)$, so the method only fully covers $T_{n + d}$ when $n + d \le 1 + n (1 - d)$. By subtracting $n$ from both sides, we get $d \le 1 - nd$. Solving for $d$ yields $d \le \frac{1}{n + 1}$.

Hence, as $n$ increases, this method covers vanishingly few $T_{n+d}$ with $d \in (0, 1)$. A generalization of this method was presented and analyzed in~\cite{BoyerAlwaysCoverTriangles}. Instead of adding only one pair of unit triangles, the generalized method continues adding down- and up-triangle pairs into the bottom trapezoidal row until they span the base of a given $T_{n+d}$. The number of additional pairs of unit triangles to add is determined based on the value of $d$. For example, the generalization uses only $n^2+4$ unit triangles as long as $d \le \frac{2}{n + 2}$. However, when $d$ is large enough that the generalization would add more than $2n$ extra unit triangles to the bottom row, then instead it reverts to using the na\"{\i}ve covering method, which the analysis in~\cite{BoyerAlwaysCoverTriangles} showed is only necessary if $d > \frac{1}{2}$.

\section{New Triangle Covering Method (Even Cases)}\label{sec:NewTriangleCoveringMethod_Even}

The open question from~\cite{BoyerAlwaysCoverTriangles}: For those $T_{n+d}$ in which $d > \frac{1}{2}$, can we do better than the na\"{\i}ve covering method? The intuition that there may be a better solution is based on a parallel to Soifer's originating question (in Section~\ref{sec:Intro}): Rather than a given triangle being a little larger than a $T_n$, as in Figure~\ref{fig:CoverTriangleTn}(b), if a given triangle is smaller than a $T_n$ by a little \emph{more} than the area of a unit triangle, then can it be covered with one \emph{less} than $n^2$ unit triangles? In this section, we answer in the affirmative. First, we present the main idea of the new covering method, which is more efficient than the na\"{\i}ve covering method on \emph{almost all} triangles, rather than only \emph{half} as in the method of~\cite{BoyerAlwaysCoverTriangles}. Next, we show how to use fewer extra pairs of unit triangles as $d$ decreases, and we show that, as $n$ increases, \emph{almost all} triangles with selected side length between $n$ and $n+1$ can be covered by the new method with at most $n^2 + 2n - 2c$ homothetic unit triangles for any constant $c \in \mathbb{N}_0$. Then, we show how to select the minimum number of extra pairs of unit triangles to use in covering a $T_{n+d}$ based on the values of $n$ and $d$.

\subsection{The Main Idea (Even Cases)}\label{subsec:MainIdeaMethod_Even}

For a given equilateral triangle $T_{n+d}$ with $d \in (0, 1)$, we will begin by always using $n^2 + 2n$ unit triangles, since this is one less than the na\"{\i}ve covering method uses. Rather than adding pairs of unit triangles into only the last trapezoidal row to cover the $\alpha$ gap, the main idea of the new covering method for even cases consists of the following two elements:

\vspace{0.15in}

\begin{enumerate}
  \item For each row $k$, add \emph{only one} extra down-triangle and up-triangle pair.
  \item For each row $k$, perform the up-left slide operation on the unit triangles using a slide value of $\frac{1}{k(k+1)}$.
\end{enumerate}

An example of this method in operation appears in Figure~\ref{fig:NewTriangleCoverMethod}. The top row, which is row 1, has only one up-triangle from the $T_n$ covering, plus the 2 extra unit triangles for the row. In each row, the addition of the 2 extra unit triangles enables us to apply the up-left slide operation that increases the diagonal length of the triangle or trapezoid that is continuously covered by the unit triangles in the row. Rather than using a slide value of $\varepsilon$, the slide value $s_k$ to use in each row $k$ is calculated to be $\frac{1}{k(k+1)}$. For example, the slide value in the top row, denoted $s_1$, is calculated to ensure that the amount of extra coverage added by increasing the diagonal length of the row is equal to the width of the base of the row after adding the 2 extra unit triangles. The base of row 1 has width $2 - s_1$ because it is lined by the 2 up-triangles, but they overlap by the amount of the slide value. Solving $1+s_1 = 2-s_1$ yields $s_1 = \frac{1}{2}$. Similarly, in Figure~\ref{fig:NewTriangleCoverMethod}, the slide value in the second row, denoted $s_2$, is shown to be $\frac{1}{6}$, which was obtained by solving for $s_2$ in $1 + s_1 + 1 + s_2 = 3 - 2s_2$, given that $s_1 = \frac{1}{2}$.

\begin{figure}[!ht]
\centering
\includegraphics[width=4.7in,height=2.4in]{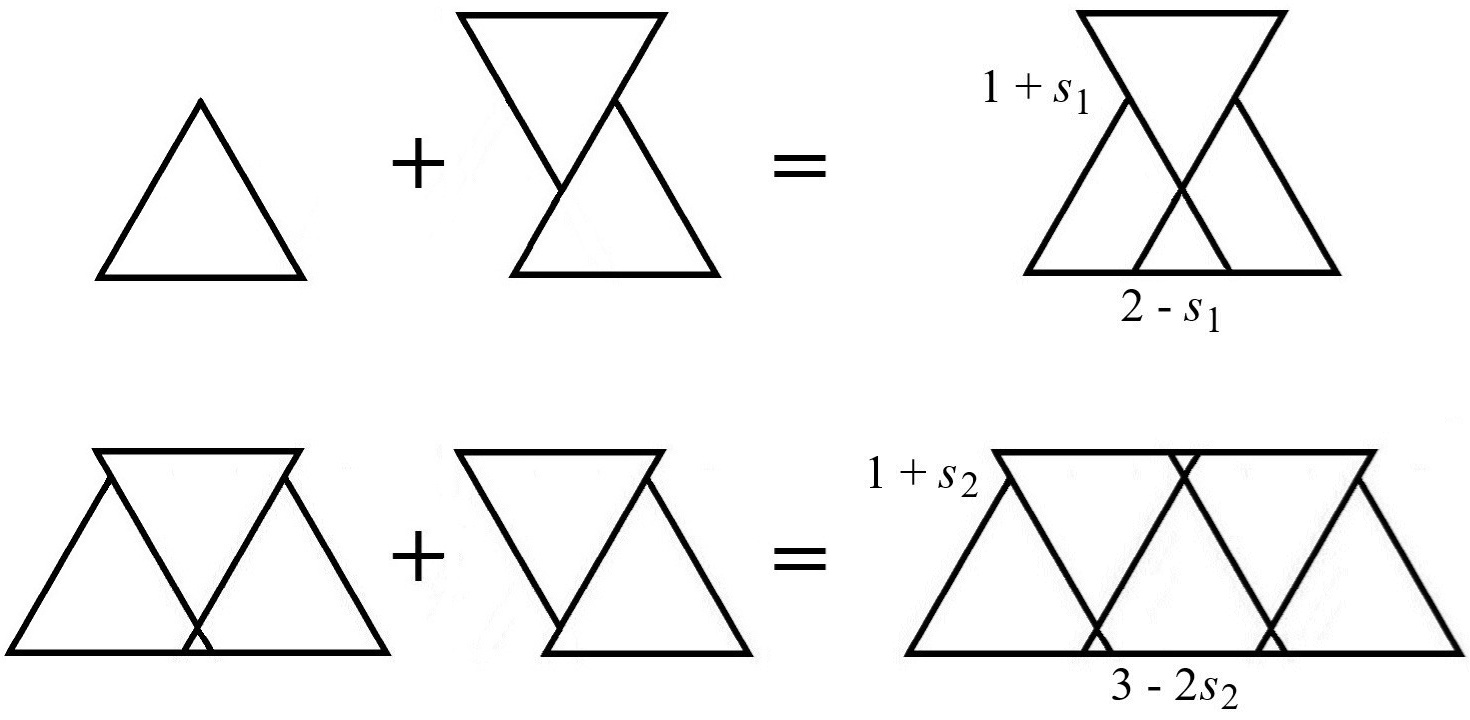}

\caption{Depiction of the main idea of the new method for covering a large triangle $T_{n+d}$ with extra pairs of homothetic unit triangles. This depiction has $n=2$, so it works for $d$ up to $\frac{2}{3}$.
\label{fig:NewTriangleCoverMethod}}
\end{figure}

For a given $T_{n+d}$, if the value of $d$ exceeds $\sum_{k=1}^{n} s_k$, then it is necessary to revert to the na\"{\i}ve covering method, i.e., covering the $T_{n+d}$ with a $T_{n+1}$ covering. However, this method can already be seen as improving on the Generalized First Conway-Soifer Method in~\cite{BoyerAlwaysCoverTriangles} (and Section~\ref{sec:GeneralizedConway}) because even at $n=2$, it has a higher threshold for $d$ before reverting to the na\"{\i}ve covering method, and the threshold for $d$ monotonically increases as $n$ increases. For example, continuing the calculations for $s_3$ and $s_4$ yields slide values of $\frac{1}{12}$ and $\frac{1}{20}$, so the maximum value for $d$ is $0.8$ for $n=4$.

\subsection{Using Fewer Pairs of Unit Triangles (Even Cases) }\label{subsec:UsingFewerUnitTriangles_Even}

As presented in the preceding subsection, the new covering method always uses $n^2 + 2n$ homothetic unit triangles to cover a $T_{n+d}$. We now adjust the new covering method so that it only starts adding pairs of extra homothetic unit triangles on a given row $j$:

\begin{enumerate}
  \item Let rows 1 to $j-1$ of the covering be the covering of a $T_{j-1}$ (no extra down-triangle and up-triangle pairs and no up-left slide operations).
  \item For each row $k$ from $j$ to $n$, add an extra down-triangle and up-triangle pair.
  \item For each row $k$ from $j$ to $n$, perform the up-left slide operation on its unit triangles using a slide value of $\frac{j}{k(k+1)}$.
\end{enumerate}

To facilitate exposition, the term \textbf{\textsc{CoverTriangle} (Basic Even)} refers to the adaptation of the main idea that incorporates the three elements above. Note that the method may also revert to the na\"{\i}ve covering method for a $T_n$ if $d = 0$ or for a $T_{n+1}$ if $d$ is too large according to the threshold in Theorem~\ref{thm:CoverEqTriangle_Even_Basic}, which establishes the efficacy of the main three elements of \textsc{CoverTriangle} (Basic Even). Also implicit in the method are the affine transformations that can be applied, before and after the main three elements above, to enable covering of non-equilateral triangles.

\begin{theorem}\label{thm:CoverEqTriangle_Even_Basic}
Given a triangle $T_{n+d}$ having a selected side length of $n+d$, with $n \in \mathbb{N}$ and $d \in (0, 1)$, and a row $1 \le j \le n \in \mathbb{N}$, \textsc{CoverTriangle} (Basic Even) covers the $T_{n+d}$ if $d \le \sum_{k=j}^{n} s_k$.
\end{theorem}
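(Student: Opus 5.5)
We argue by a row-by-row induction from the top of the equilateral $T_{n+d}$ downward; the non-equilateral case then follows by the affine transformation noted before the statement. Here $s_k = \frac{j}{k(k+1)}$ denotes the slide value of row $k$ in \textsc{CoverTriangle} (Basic Even). The invariant we maintain is the following. After processing rows $1$ through $k$ (with $k \ge j-1$), the unit triangles of those rows continuously cover the top equilateral sub-triangle of $T_{n+d}$ of side
\[
D_k = k + \sum_{i=j}^{k} s_i ,
\]
whose apex is the apex of $T_{n+d}$. For $k = j-1$ this is just the standard covering of $T_{j-1}$ from Figure~\ref{fig:CoverTriangleTn}(a), so $D_{j-1}=j-1$. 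The telescoping identity $\sum_{i=j}^{k}\frac{j}{i(i+1)} = j\left(\frac{1}{j}-\frac{1}{k+1}\right) = \frac{k+1-j}{k+1}$ gives the closed form
\[
D_k = k + 1 - \frac{j}{k+1}.
\]

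For the inductive step, fix a row $k$ with $j \le k \le n$. Before modification, row $k$ contains $k$ up-triangles and $k-1$ down-triangles; with the extra pair it has $k+1$ up-triangles and $k$ down-triangles. After the up-left slide with value $s_k$, the row covers a trapezoid whose diagonal has length $1+s_k$, and whose top edge lies along the bottom edge of the triangle already covered (width $D_{k-1}$). We need two facts about this trapezoid.

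\begin{enumerate}
\item \emph{Continuity within the row.} Each down-triangle, slid up-left along the right leg of its preceding up-triangle, still overlaps that up-triangle and the following one. Hence the union of the row's triangles contains the horizontal strip of height $1+s_k$ (measured along the diagonal) between its left and right boundary lines. This is the geometric content of the slide operation as described in Section~\ref{sec:GeneralizedConway}, and we would verify it explicitly for a single down/up pair.
\item \emph{Width of the base.} The base of row $k$ is lined by $k+1$ up-triangles, and consecutive ones overlap by $s_k$. So the covered base width is $(k+1) - k s_k$.
\end{enumerate}

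The row must reach the sides of the triangle of side $D_{k-1} + 1 + s_k = D_k$. Its base therefore has to be at least $D_k$, that is,
\[
(k+1) - k s_k \ge k + 1 - \frac{j}{k+1}.
\]
Substituting $s_k = \frac{j}{k(k+1)}$ gives $k s_k = \frac{j}{k+1}$, so equality holds exactly; this is precisely how $s_k$ was defined. The top edge of the row must also match $D_{k-1}$. Since the row is a trapezoid with $60^\circ$ sides, its top width is its base width minus its diagonal length: $D_k - (1+s_k) = D_{k-1}$. So the row sits flush against the covered triangle above it, and the invariant propagates.

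At $k=n$ the covered triangle has side $D_n = n + \sum_{k=j}^n s_k \ge n+d$. Placing it with apex at the apex of $T_{n+d}$ and sides along the sides of $T_{n+d}$, it contains $T_{n+d}$, and we are done.

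The main obstacle is step (1): making rigorous that the up-left slide produces gapless coverage of the full trapezoid of diagonal $1+s_k$. In particular we must check that no hole appears near the top edge where the slid down-triangles meet the previous row, and that the leftmost and rightmost triangles actually reach the slanted sides of $T_{n+d}$. I would handle this with coordinates along the two diagonal directions. Each up-triangle $i$ (for $i=0,\dots,k$) is translated left by $i s_k$ from its unslid position, and each down-triangle is translated up-left by $s_k$ along the right-leg direction and left by $i s_k$. I would then check that every point of the trapezoid lies in one of these triangles by a local case check on a single period of the pattern. A secondary check is that $s_k < 1$ for all $k \ge j$, which holds since $s_k \le \frac{j}{j(j+1)} = \frac{1}{j+1} \le \frac12$; this is what allows consecutive triangles to overlap in the required way.
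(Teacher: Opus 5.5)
Your proposal is correct and follows essentially the same route as the paper. Rows $1$ to $j-1$ form a $T_{j-1}$, and a top-down row induction shows that $s_k=\frac{j}{k(k+1)}$ is exactly the value making the base $(k+1)-ks_k$ of row $k$ equal the side $k+\sum_{i=j}^{k}s_i$ of the stacked triangle. Your closed form $D_k=k+1-\frac{j}{k+1}$, your starting of the induction at $k=j-1$, and your top-width identity $D_k-(1+s_k)=D_{k-1}$ (in place of the paper's separate spill-over computation for row $j$) are only streamlinings. The within-row continuity you flag as the main obstacle is likewise taken for granted in the paper.
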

\begin{proof}
We focus on the equilateral $T_{n+d}$ (after affine transformation, if needed). First consider the rows above $j$. Since sliding is not applied to the rows 1 to $j-1$, nor are the extra unit triangles added, rows 1 to $j-1$ form an equilateral $T_{j-1}$ covering. The width of the base of a $T_{j-1}$ is $j-1$. Row $j$ is the first to receive 2 extra equilateral unit triangles and to have the up-left slide operation applied. The slide value on row $j$ is denoted $s_j$. The combination of the equilateral $T_{j-1}$ and row $j$ must satisfy the equation $j+s_j = j + 1 - js_j$ because the diagonal length must equal the base length of the equilateral $T_{j+s_j}$ that is being covered by the combination. Solving the equation for $s_j$ yields $s_j = \frac{1}{j + 1}$. The top of row $j$ is lined by $j$ down-triangles, and each down-triangle except the last overlaps its succeeding down-triangle by $s_j$ units. Therefore, the total width of the top of row $j$ is $j - (j-1)s_j$. However, small parts of the leftmost and rightmost down-triangles spill over (do not cover part of the area of) the $T_{j+s_j}$ that is being covered, namely small equilaterally triangular parts with side lengths of $s_j$. Therefore, the portion of the top of row $j$ that intersects with the $T_{j+s_j}$ that is being covered has length $j - (j+1)s_j$. Substituting the previously calculated value of $s_j = \frac{1}{j + 1}$ yields a length of $j-1$, which is the exact width of the base of the $T_{j-1}$ covering generated for rows 1 to $j-1$.

Given that the top $j$ rows fit together into a full covering of a $T_{j+s_j}$, we turn to the rest of the covering of an equilateral $T_{n+d}$. We use an inductive proof in which the unit equilateral triangle slide value of $\frac{j}{k(k+1)}$ used on each row $k \ge j$ gives row $k$ a base length equal to the length of the diagonal formed by stacking rows 1 to $k$ in the shape of an equilateral $T_{k + \sum_{i=j}^{k} s_i}$. As the inductive anchor, we have this property for $k=j$ because the assignment $j+s_j = j + 1 - js_j$ yielded $s_j = \frac{1}{j + 1}$. Multiplying by $\frac{j}{j}$ and then substituting $j=k$ in the denominator yields the anchor case of $s_j = \frac{j}{k(k + 1)}$. As the inductive hypothesis, we assume that the base length equals the diagonal length for each row $i$ from $j$ to $k-1$ by assigning each $s_i = \frac{j}{i(i+1)}$. As the inductive step, we prove that assigning $s_k = \frac{j}{k(k+1)}$ yields a base length for row $k$ that is equal to the diagonal length of a $T_{k+\sum_{i=j}^{k} s_i}$.

The diagonal length of a $T_{k+\sum_{i=j}^{k} s_i}$ is equated to its base length as follows:

\begin{eqnarray*}
k + \sum\limits_{i=j}^{k} s_i = k + 1 - k \times s_k
\end{eqnarray*}

\noindent Simplifying both sides, we obtain:

\begin{eqnarray*}
\sum\limits_{i=j}^{k} s_i = 1 - k \times s_k
\end{eqnarray*}

\noindent Next, we substitute known values from the inductive hypothesis:

\begin{eqnarray*}
s_k + \sum\limits_{i=j}^{k-1} \frac{j}{i(i+1)} = 1 - k \times s_k
\end{eqnarray*}

\noindent Then we perform partial fraction decomposition on the summation term:

\begin{eqnarray*}
s_k + j \times \sum\limits_{i=j}^{k-1} \left(\frac{1}{i} - \frac{1}{i+1}\right) = 1 - k \times s_k
\end{eqnarray*}

\noindent Which we can telescopically collapse and rearrange to:

\begin{eqnarray*}
(k+1) s_k &=& 1 - j \times \left(\frac{1}{j} - \frac{1}{k}\right) = \frac{j}{k}
\end{eqnarray*}

\noindent Such that:

\begin{eqnarray*}
s_k &=& \frac{j}{k(k+1)}
\end{eqnarray*}

\noindent Hence, the diagonal length of $T_{k+\sum_{i=j}^{k} s_i}$ is equal to the base length of row $k$ when we assign $s_k = \frac{j}{k(k+1)}$, which completes the inductive proof that \textsc{CoverTriangle} (Basic Even) fully covers any equilateral triangle $T_{n+d}$ for which $d \le \sum_{k=j}^{n} s_k$. The theorem follows for all triangles via affine transformations.
\end{proof}

In \textsc{CoverTriangle} (Basic Even), the slide value is magnified (multiplied) by $j$, the row on which we start adding the extra pair of unit triangles and performing the up-left slide operation. As an example, if we recalculate the values of $s_1$ and $s_2$ with $j=2$, $s_1$ is 0 rather than $\frac{1}{2}$, but $s_2 = \frac{1}{3}$ rather than only $\frac{1}{6}$. As shown in Theorem~\ref{thm:CoverEqTriangle_Even_Basic_AlmostAlwaysBetter}, even though the series of non-zero slide values starts with a smaller fraction, the sum of the slide values converges to 1 for any constant $j$. For example, with $j=1$, the series contains the reciprocals of the \textbf{pronic numbers} and with $j=2$, the series contains the fractional reciprocals of the \textbf{triangular numbers}, both of which converge to 1.

\begin{theorem}\label{thm:CoverEqTriangle_Even_Basic_AlmostAlwaysBetter}
With $n \in \mathbb{N}$ and $d \in (0, 1)$, triangles having selected side lengths of $n+d$ can \emph{almost always} be covered using at most $n^2 + 2n - 2(j-1)$ homothetic unit triangles for any constant $j \in \mathbb{N}$.
\end{theorem}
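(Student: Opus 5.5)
The plan is to reduce the statement to a computation of the coverage threshold in Theorem~\ref{thm:CoverEqTriangle_Even_Basic}, and then to make \emph{almost always} precise. Fix a constant $j \in \mathbb{N}$ and let $n \ge j$. \textsc{CoverTriangle} (Basic Even), started on row $j$, adds one down-triangle and up-triangle pair to each of the rows $j, \ldots, n$. That is $n-j+1$ pairs, so the method uses
\[
n^2 + 2(n-j+1) = n^2 + 2n - 2(j-1)
\]
homothetic unit triangles. By Theorem~\ref{thm:CoverEqTriangle_Even_Basic}, it covers every $T_{n+d}$ with $d \le \sum_{k=j}^{n} s_k$, where $s_k = \frac{j}{k(k+1)}$.

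Next I will evaluate this threshold by the same partial fraction telescoping used in the proof of Theorem~\ref{thm:CoverEqTriangle_Even_Basic}:
\[
\sum_{k=j}^{n} \frac{j}{k(k+1)} = j\left(\frac{1}{j} - \frac{1}{n+1}\right) = 1 - \frac{j}{n+1} = \frac{n+1-j}{n+1}.
\]
So for fixed $n$, the set of $d \in (0,1)$ that the method fails to handle is $\left(\frac{n+1-j}{n+1}, 1\right)$, an interval of length $\frac{j}{n+1}$. For those $d$ one reverts to the na\"{\i}ve method. As a sanity check, $j=1$ gives the threshold $\frac{n}{n+1}$, which matches the value $0.8$ computed earlier for $n=4$.

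The remaining step, and the only conceptual one, is to say exactly what \emph{almost always} means, since the interpretation is not fixed earlier in the paper. I will adopt the reading consistent with the earlier discussion of Conway and Soifer's methods covering ``vanishingly few'' triangles: for each $n$, measure the proportion of $d \in (0,1)$ that is covered. That proportion is $1 - \frac{j}{n+1}$, which tends to $1$ as $n \to \infty$ because $j$ is a constant. Equivalently, the exceptional measure $\frac{j}{n+1}$ tends to $0$. I will also address small $n$. When $n < j$, the bound $n^2 + 2n - 2(j-1)$ is below $n^2$, so it cannot be achieved; but these are only finitely many $n$, so they do not affect the asymptotic statement. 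For $n \ge j$, every $d$ in the covered range is handled with the stated count. Substituting $c = j-1 \in \mathbb{N}_0$ then recovers the form $n^2 + 2n - 2c$ announced in the introduction.
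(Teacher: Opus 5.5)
Your proposal is correct and takes essentially the paper's route: reduce to the coverage threshold of Theorem~\ref{thm:CoverEqTriangle_Even_Basic}, then show that the covered fraction of $d \in (0,1)$ tends to $1$ as $n$ grows with $j$ held constant. The one difference is that you telescope the finite sum to $1 - \frac{j}{n+1}$ (as the paper itself does later in the proof of Theorem~\ref{thm:CoverEqTriangle_SelectingJ_Even}), which gives an explicit exceptional measure $\frac{j}{n+1}$, whereas the paper only shows that the infinite tail $\sum_{i \ge j} \frac{j}{i(i+1)}$ equals $1$; also, for $n = j-1$ the count equals $n^2$ rather than lying strictly below it, which changes nothing.
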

\begin{proof}
Given any such triangle $T_{n+d}$, \textsc{CoverTriangle} (Basic Even) can cover the $T_{n+d}$ if $d \le \sum_{k=j}^{n} s_k$, per Theorem~\ref{thm:CoverEqTriangle_Even_Basic}.
This threshold for $d$ is the multiple $j$ of the sum of the reciprocals of the pronic numbers starting at term $j$, which converges to 1 as follows:

\begin{eqnarray*}
\sum_{i=j}^{\infty} s_i &=& \sum_{i=j}^{\infty} \frac{j}{i(i+1)}\\
                        &=& j \times \left( \sum_{i=1}^{\infty} \frac{1}{i(i+1)} - \sum_{i=1}^{j-1} \frac{1}{i(i+1)}\right)\\
                        &=& j \times \left( 1 - \sum_{i=1}^{j-1} \left( \frac{1}{i} - \frac{1}{i+1} \right) \right)\\
                        &=& j \times \left( 1 - \left( 1 - \frac{1}{j} \right) \right)\\
                        &=& 1
\end{eqnarray*}

\noindent Thus, given a randomly selected value for $d$, as $n$ increasingly exceeds any selected constant $j \in \mathbb{N}$, the probability approaches 1 that \textsc{CoverTriangle} (Basic Even) will cover an equilateral $T_{n+d}$ with at most $n^2 + 2n - 2(j-1)$ homothetic unit equilateral triangles. The theorem follows for all triangles via affine transformations.
\end{proof}

\subsection{Selecting the Best \texorpdfstring{$j$}{\emph{j}} (Even Cases)}\label{subsec:SelectingJ_Even}

It is possible to poorly select the row $j$ on which to start adding the extra unit triangle pairs and performing the up-left slide operations. For example, given a $T_{n+d}$ with $d = \frac{3}{4}$, if $j$ were selected to be $n$, then \textsc{CoverTriangle} (Basic Even) would not generate a large enough covering. This is of little consequence in the proof of Theorem~\ref{thm:CoverEqTriangle_Even_Basic_AlmostAlwaysBetter} because $n$ increases arbitrarily, and $j$ is a constant. However, when given a particular $T_{n+d}$ to cover, it is both necessary and possible to use the values of $n$ and $d$ to calculate the best row $j$ to select that minimizes the number of additional pairs of unit triangles that are added by \textsc{CoverTriangle} (Basic Even). Specifically, Theorem~\ref{thm:CoverEqTriangle_SelectingJ_Even} characterizes the maximum value of $j$ that can be selected to ensure that \textsc{CoverTriangle} (Basic Even) covers a given $T_{n+d}$.

\begin{theorem}\label{thm:CoverEqTriangle_SelectingJ_Even}
A triangle $T_{n+d}$ having a selected side length of $n+d$, with $n \in \mathbb{N}$ and $d \in (0, 1)$, can be covered by $n^2 + 2n - 2(j-1)$ homothetic unit triangles with $j = \lfloor (1 - d)(n + 1) \rfloor$ and not with $j = \lfloor (1 - d)(n + 1) \rfloor + 1$.
\end{theorem}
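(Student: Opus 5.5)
The plan is to compute the coverage threshold of \textsc{CoverTriangle} (Basic Even) in closed form and then solve for $j$. By Theorem~\ref{thm:CoverEqTriangle_Even_Basic}, with the extra pairs starting on row $j$ the method covers $T_{n+d}$ whenever $d \le \sum_{k=j}^{n} s_k$, where $s_k = \frac{j}{k(k+1)}$. The first step is to evaluate this sum by the same partial fraction decomposition and telescoping used in the proofs of Theorems~\ref{thm:CoverEqTriangle_Even_Basic} and~\ref{thm:CoverEqTriangle_Even_Basic_AlmostAlwaysBetter}:
\begin{eqnarray*}
\sum_{k=j}^{n} \frac{j}{k(k+1)} = j\left(\frac{1}{j} - \frac{1}{n+1}\right) = 1 - \frac{j}{n+1}.
\end{eqnarray*}
So the covering condition is $d \le 1 - \frac{j}{n+1}$, which is equivalent to $j \le (1-d)(n+1)$. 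Since $j$ is an integer, the largest admissible $j$ is $\lfloor (1-d)(n+1) \rfloor$. Counting triangles, the method uses $n^2$ triangles plus one extra pair on each of rows $j,\dots,n$, i.e.\ $n^2 + 2(n-j+1) = n^2 + 2n - 2(j-1)$. This gives the positive half of the statement.

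Two boundary issues need care. Theorem~\ref{thm:CoverEqTriangle_Even_Basic} requires $1 \le j \le n$. Because $d>0$, we have $(1-d)(n+1) < n+1$, so $j \le n$. However, $j = \lfloor (1-d)(n+1)\rfloor$ is $0$ when $d > \frac{n}{n+1}$. In that case the formula gives $n^2+2n+2$ triangles, and I would interpret it through the na\"{\i}ve covering fallback, which needs only $(n+1)^2 = n^2+2n+1 \le n^2+2n+2$ triangles; so the count is still achieved. I would state this case explicitly.

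For the negative half, with $j' = \lfloor (1-d)(n+1)\rfloor + 1$, we have $j' > (1-d)(n+1)$, hence $\sum_{k=j'}^{n} s_k = 1 - \frac{j'}{n+1} < d$. The computation then shows that the slide construction starting at row $j'$ produces a covered equilateral triangle of side exactly $n + 1 - \frac{j'}{n+1} < n+d$, so it does not cover $T_{n+d}$. I expect this to be the main obstacle in interpretation. If ``not with'' means only that \textsc{CoverTriangle} (Basic Even) with this $j'$ fails, the argument is immediate, because the inductive proof of Theorem~\ref{thm:CoverEqTriangle_Even_Basic} shows the slide values are forced (base equals diagonal), so no larger triangle is reached. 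If it means that no arrangement of $n^2+2n-2j'+2 = n^2 + 2(n-j'+1)$ homothetic unit triangles suffices, this is the Baek--Lee-type lower bound of Section~\ref{sec:OptimalEvenCases}. I would therefore prove the method-specific claim here and defer optimality over all coverings to that later extension, noting that the threshold $d \le 1 - \frac{j}{n+1}$ derived above is exactly the quantity that the lower bound must match.
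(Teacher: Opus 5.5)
Your proposal is correct and follows essentially the same route as the paper: telescope $\sum_{k=j}^{n} \frac{j}{k(k+1)}$ to $1 - \frac{j}{n+1}$, solve $d \le 1 - \frac{j}{n+1}$ to get $j \le (1-d)(n+1)$, take the floor, observe that the floor plus one exceeds $(1-d)(n+1)$, and fall back on the na\"{\i}ve covering when $j = 0$. Your explicit triangle count, the check that $j \le n$, the comparison $(n+1)^2 \le n^2+2n+2$ in the $j=0$ case, and your separation of the method-specific ``not with'' claim from the general lower bound of Section~\ref{sec:OptimalEvenCases} are small clarifications that the paper leaves implicit.
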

\begin{proof}
The value of $d$ must not exceed the sum of the slide values used on rows $j$ through $n$ of the covering that the \textsc{CoverTriangle} (Basic Even) method generates. Hence, we start with that inequality constraint for equilateral triangles and then solve for $j$, as follows:

\begin{eqnarray*}
d &\le& \sum_{i=j}^{n} s_i\\
  &=& \left( \sum_{i=1}^{n} \frac{j}{i(i+1)} \right) - \left( \sum_{i=1}^{j-1} \frac{j}{i(i+1)} \right)\\
  &=& j \times \left( \sum_{i=1}^{n} \left( \frac{1}{i} - \frac{1}{i+1} \right)  - \sum_{i=1}^{j-1} \left( \frac{1}{i} - \frac{1}{i+1} \right) \right)\\
  &=& j \times \left( \left( 1 - \frac{1}{n+1} \right) - \left( 1 - \frac{1}{j} \right) \right)\\
  &=& 1 - \left( \frac{j}{n+1} \right)
\end{eqnarray*}

\noindent Solving for $j$, we find that we must select a row such that:

\begin{eqnarray*}
j \le (1 - d)(n + 1)
\end{eqnarray*}

Since $j$ must be an integer, we use the floor function to ensure that the selected row $j$ is still small enough produce a sum of slide values that is equal to or exceeds the value of $d$. Therefore, we select:

\begin{eqnarray*}
j = \lfloor (1 - d)(n + 1) \rfloor
\end{eqnarray*}

Next, we observe that the value $\lfloor (1 - d)(n + 1) \rfloor + 1$ cannot be selected because it is greater than $(1 - d)(n + 1)$. Furthermore, for the vanishingly few triangles (per Theorem~\ref{thm:CoverEqTriangle_Even_Basic_AlmostAlwaysBetter}) that $j$ is assigned the value 0, the $T_{n+d}$ is covered by reverting to the na\"{\i}ve covering method. The theorem follows for all triangles via affine transformations.
\end{proof}

To facilitate exposition, the term \textbf{\textsc{CoverTriangle} (Full Even)} refers to the version of the \textsc{CoverTriangle} (Basic Even) that selects $j$ using $j = \lfloor (1 - d)(n + 1) \rfloor$. As an example of the theorem, with $n = 4$ and $d = 0.3$, $j$ is selected to be the floor of $\frac{7}{2}$, which is 3.

\section{Proof of Optimality for the Even Cases}\label{sec:OptimalEvenCases}

In this section, we first review the Baek and Lee proof for the $n^2+2$ case and then extend it to establish upper bounds for all $n^2 + 2p$ cases with $1 \le p \le n \in \mathbb{N}$. Then, we prove that these upper bounds are tight by showing that they are met by \textsc{CoverTriangle} (Full Even).

\subsection{The Baek-Lee Machinery}\label{subsec:BaekLeeMachinery}

From~\cite{BaekLee}, an \textbf{\emph{$H$-triangle}} is a triangle $T$ with one leg $l$ parallel to the $x$-axis. For an $H$-triangle $T$, the $y$-coordinate of leg $l$ is denoted $y_{_T}$. The \textbf{\emph{base}} of $T$ is the length of $l$, and the \textbf{\emph{height}} of $T$ is the minimal distance between the vertex of $T$ not in $l$ and the line containing $l$. A unit $H$-triangle has a base and height of 1 unit. Without loss of generality, any triangle $T$ can be converted to an $H$-triangle by first rotating it to have a selected side parallel to the $x$-axis, as its leg, and positioning it with the left vertex of its leg at the origin. Then, coordinate scaling can be performed so that its height matches its base. Hence, in this section, we let $T_{n+d}$ denote an $H$-triangle to be covered, let $T_1$ denote a unit $H$-triangle, let \textbf{\emph{up-triangle}} refer to an $H$-triangle with its leg at the least $y$-coordinate it occupies, and let \textbf{\emph{down-triangle}} refer to an $H$-triangle with its leg at the greatest $y$-coordinate it occupies.

To enable analysis of what is feasible for any triangle covering method, Baek and Lee defined functions to help immunize measurement of the areas of triangles from their exact positions in a covering. First was a utility function $\tilde f_{_T}(t)$ for the width of the intersection between an $H$-triangle $T$ and the line segment $y=t$. The second was a function $f_{_T}(t): [0, 1)\to\mathbb{R}$ to collapse the domain relative to $\tilde f_{_T}(t)$, as follows: let $f_{_T}(t) = \sum_{n \in \mathbb{Z}} \tilde f_{_T}(t+n)$.

Let $\{x\}$ denote $x$ modulo 1, e.g., $\{2.4\} = 0.4$ and $\{-2.4\} = 0.6$. For a unit $H$-triangle, $f_{_{T_1}}(t) = \{t - y_{_{T_1}}\}$ if the $T_1$ is a down-triangle, and $f_{_{T_1}}(t) = 1 - \{t - y_{_{T_1}}\}$ if the $T_1$ is an up-triangle. Observe that these are equivalent to $f_{_{T_1}}(t) = \{t - a\}$ for a down-triangle and $f_{_{T_1}}(t) = 1 - \{t - a\}$ for an up-triangle with $a$ being the difference between the highest $y$-coordinate of the $T_1$ and the least integer $k$ of a line $y=k$ that intersects the $T_1$. For any unit $H$-triangle, the function $f_{_{T_1}}(t)$ measures the change in width of the $T_1$ as $t$ increases, so it is a linear function with a discontinuity at $a$ and a slope of -1 for an up-triangle or 1 for a down-triangle.

Important aspects of the coverings of large $H$-triangles with unit $H$-triangles can be modelled by functions in an abelian group $\mathcal{T}$ of ``all functions from $[0, 1) \to \mathbb{R}$ that can be expressed as a finite addition and subtraction of generating functions of the form $t \mapsto \{t - a\}$ and $t \mapsto 1 - \{t - a\}$ with arbitrary $a \in [0, 1)$"~\cite[p. 117]{BaekLee}. Each such function $f \in \mathcal{T}$ has a number of properties in common, such as being linear with a slope $m \in \mathbb{Z}$ corresponding to the difference in positive and negative $t$'s in $f$'s finite addition and subtraction series of the generating functions. The area under any function $f \in \mathcal{T}$ is denoted $\int f$. Furthermore, because $\int f$ is measuring the area of a triangle with an equal base and height, it equals $\frac{b}{2}$ for some $b \in \mathbb{Z}$ where $b - m$ is divisible by 2. For example, given the equilateral triangle covering in Figure~\ref{fig:CoverTriangleTn}(a), stretch the $y$-coordinate system by $\frac{2}{\sqrt{3}}$ and then let $f$ represent the unit triangles in the covering. Then, $\int f = \frac{n^2}{2} = \frac{9}{2}$, and the slope of $f$ is based on having one more up-triangle (one extra $1 - \{t - a\}$) per row of the $T_n$ covering.

If both a $T_{n+d}$ and a $T_n$ are positioned as upward pointing with their base legs on the $x$-axis and their left base-leg vertices at the origin, then $f_{_{T_{n+d}}}(t)$ and $f_{_{T_{n}}}(t)$ measure the sums of the widths of the $T_{n+d}$ and the $T_n$ take at $t$ units above each integer from $0$ through $n$. At every one of the $n$ vertical positions from $0+t$ to $(n-1)+t$, the $T_{n+d}$ is wider by $d$ than the $T_n$, and at $n+t$, the $T_n$ has width $0$ and the $T_{n+d}$ has width $max(0, d-t)$. Accordingly, for all $t \in [0, 1)$, we have:

\begin{eqnarray*}
f_{_{T_{n+d}}}(t) = f_{_{T_{n}}}(t) + nd + max(0, d-t)
\end{eqnarray*}

Baek and Lee used this machinery to help prove that the first Conway-Soifer method in~\cite{ConwaySoiferGeom, ConwaySoifer} was an optimal method for covering a $T_{n+d}$ when restricted to only $n^2+2$ homothetic unit triangles. First, they showed that it can cover a $T_{n+d}$ with no $\alpha$ gap using only $n^2+2$ homothetic unit triangles if $d \le \frac{1}{n+1}$ (see Section~\ref{sec:GeneralizedConway}). Then, they showed that there was a contradiction to the feasibility of fully covering a $T_{n+d}$ with only $n^2+2$ homothetic unit triangles if $d > \frac{1}{n+1}$.

\begin{theorem}[Baek and Lee~\cite{BaekLee}]\label{thm:BaekLee_Even}
If $d > \frac{1}{n+1}$, then a triangle $T_{n+d}$ having a selected side length of $n+d$ cannot be covered by $n^2 + 2$ homothetic unit triangles.
\end{theorem}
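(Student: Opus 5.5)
The plan is to argue by contradiction with the Baek--Lee functions. Assume an $H$-triangle $T_{n+d}$, placed with its base leg on the $x$-axis, is covered by unit $H$-triangles $T^{(1)},\dots,T^{(N)}$ with $N=n^2+2$. Put $g=\sum_{i} f_{_{T^{(i)}}}\in\mathcal{T}$. At every height $t+k$ the horizontal widths of the covering pieces must together be at least the width of $T_{n+d}$ there. Summing over $k\in\mathbb{Z}$ therefore gives $g(t)\ge f_{_{T_{n+d}}}(t)$ for all $t\in[0,1)$. Also $\int g=\frac{N}{2}=\frac{n^2+2}{2}$, because integrating each $f_{_{T^{(i)}}}$ recovers the area of a unit $H$-triangle.

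Next I will subtract the exact covering of $T_n$. Set $h=g-f_{_{T_n}}\in\mathcal{T}$; this is legitimate because $f_{_{T_n}}$ is itself the sum of $n^2$ unit-triangle functions. Then $\int h=\frac{n^2+2}{2}-\frac{n^2}{2}=1$. The identity $f_{_{T_{n+d}}}=f_{_{T_n}}+nd+\max(0,d-t)$ gives the pointwise bound
\begin{eqnarray*}
h(t)\ \ge\ nd+\max(0,d-t)\qquad (t\in[0,1)).
\end{eqnarray*}
Since $h$ is a finite signed sum of generators, it is piecewise linear with a constant integer slope $m$. Its discontinuities are integer jumps: $+1$ for each added $1-\{t-a\}$ and $-1$ for each added $\{t-a\}$. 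Consequently $\{h(t)\}=\{h(0)+mt\}$. The lattice condition for $\int h=\frac{b}{2}$ with $b=2$ forces $m$ to be even. The whole problem is thus reduced to a one-dimensional extremal question: what is the least possible value of $\int h$ for such a function dominating $nd+\max(0,d-t)$? The goal is to show this minimum exceeds $1$ whenever $d>\frac{1}{n+1}$.

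I would split by the slope. For $m=0$, $h$ takes values in a single coset $c+\mathbb{Z}$, all at least $nd$, and near $t=0$ it must be at least $(n+1)d-t$. For $m\ne 0$, every stretch of linear decrease must be compensated by upward integer jumps, and I would bound $\int h$ from below by integrating the sawtooth. The crude bound $\int h\ge nd$ only yields $d\le\frac1n$. So the main obstacle is extracting the extra $\frac{1}{n+1}$ versus $\frac1n$ gain from the triangular bump $\max(0,d-t)$ near $0$.

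To get that gain, I would first try to exploit periodicity and the integer jumps: once $h$ exceeds $(n+1)d$ at $0^+$, either it stays above $1$ on a set of positive measure (an integer jump down is needed to drop to about $nd$), or it drops continuously with slope $m\le -2$. In each case I would compute the minimal area explicitly and compare it with $1$. If this one-directional argument proves too weak, the fallback is to apply the same construction with each of the three sides of the triangle playing the role of the horizontal leg. That gives three inequalities of the above type, which I would combine, or average over the choice of origin, to sharpen the bound to $(n+1)d\le 1$.
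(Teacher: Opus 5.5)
Your set-up is sound and agrees with the paper's up to an additive constant: the paper subtracts $r=f_{_{T_n}}+1$, so its $h$ has integral $0$, while yours has integral $1$. The bound $h(t)\ge nd+\max(0,d-t)$, the integrality of $h(t)-h(0)-mt$, and the evenness of $m$ are all correct.

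The gap is that the proposal stops where the proof has to begin: you never prove $\int h>1$ for $d>\frac{1}{n+1}$. The $m=0$ case can be finished along the lines you indicate. Let $v=\min h$. Either $v\ge (n+1)d$, or the coset structure forces $h\ge v+1$ on $[0,(n+1)d-v)$; in both cases $\int h\ge (n+1)d$. For even $m\neq 0$, however, you only announce that you would ``integrate the sawtooth,'' with no estimate, and your fallback to the other two sides shows you do not yet have an argument. This step is not automatic. The sawtooth pieces on $[0,d)$ (slope $m+1$) and on $[d,1)$ (slope $m$) share one phase, and bounding them separately loses too much in general. For $m=0$, separate bounds give only $\int h\ge nd+d^2$, which is below $1$ for $d$ slightly above $\frac{1}{n+1}$.

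The missing idea is to keep only the linear part of the bump and absorb it into $h$; this makes the slope analysis unnecessary. Since $\max(0,d-t)\ge d-t$, the function $u(t)=h(t)+t$ satisfies $u(t)\ge (n+1)d$ on $[0,1)$. It has these properties:
\begin{itemize}
\item it lies in $\mathcal{T}$, because $t=\{t-0\}$ is a generator;
\item $\int u=\frac{3}{2}$;
\item its slope $m+1$ is odd, hence nonzero, because $m$ is even.
\end{itemize}
Your parity observation is exactly what rules out $m=-1$, the one slope for which the next step fails. The quantity $u(t)-(n+1)d$ is nonnegative and differs from $(m+1)t+c$ by an integer, for a suitable constant $c$. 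Hence $u(t)\ge (n+1)d+\{(m+1)t+c\}$. A sawtooth of nonzero integer slope integrates to $\frac{1}{2}$ over $[0,1)$, so $\frac{3}{2}\ge (n+1)d+\frac{1}{2}$, i.e.\ $d\le\frac{1}{n+1}$.

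This is the paper's step in different notation. Its $h'=h-g'$ with $g'=-t$ is your $u-1$, bounded below by $\delta=(n+1)d-1$. Its appeal to Lemmas 14 and 15 of Baek and Lee, which force $\int h'\ge 1$ for a function in $\mathcal{T}$ bounded below by a positive constant, does the work of the sawtooth integral above.
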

\begin{proof}
Affine-transform the $T_{n+d}$ into an $H$-triangle. Let $N = n^2 + 2$ and let $S_1, \ldots, S_N$ be the homothetic unit $H$-triangles that are assumed to cover the $T_{n+d}$ of base and height $n+d$. Let $f{_{S}} = \sum_{i=1}^{N} f_{_{S_i}}$. Define the area reference function $r \in \mathcal{T}$ to be $r = f_{_{T_{n}}} + 1$. Note that $\int r = \frac{n^2 + 2}{2}$ because a $T_n$ has area $\frac{n^2}{2}$ and two extra unit $H$-triangles each add $\frac{1}{2}$ to the area. Because the domain of functions in $\mathcal{T}$ is $[0, 1)$, the increase of area by 1 has the effect of increasing $r$ by 1 relative to $f_{_{T_{n}}}$.

To enable comparison of the area occupied by a $T_{n+d}$ with that of the assumed covering provided by $S_1, \ldots, S_N$, define $g = f_{_{T_{n+d}}} - r$ and $h = f_{_{S}} - r$. For all $t \in [0, 1)$, we have:

\begin{eqnarray*}
g(t) &=& f_{_{T_{n+d}}}(t) - (f_{_{T_{n}}}(t) + 1)\\
     &=& (f_{_{T_{n}}}(t) + nd + max(0, d-t)) - (f_{_{T_{n}}}(t) + 1)\\
     &=& nd + max(0, d-t) - 1\\
     &=& -(1 - nd) + max(0, d-t)\\
     &\ge& -(1 - nd) + d - t\\
     &\ge& -t + ((n+1)d - 1)
\end{eqnarray*}

Let $\delta = ((n+1)d - 1)$. Then, observe that $\delta > 0$ when $d > \frac{1}{n+1}$. Given the definition $h = f_{_{S}} - r$, we can conclude that $\int h \ge \frac{1}{2}$ according to Corollary 16 in~\cite{BaekLee} as follows. By the contradictive assumption (that $S_1, \ldots, S_N$ cover the $T_{n+d}$), we have $h(t) \ge g(t)$ for all $t \in [0, 1)$. Define $g^{\prime} = -t$. Note that $1 \in \mathcal{T}$ because $1 = (1 - \{t - a\}) + \{t - a\}$ and therefore $g^{\prime} \in \mathcal{T}$ because $-t = -\{t\}$ in the domain $t \in [0, 1)$, and $-\{t\} = (1 - \{t - a\}) + 1$ with $a = 0$. Hence, $h(t) \ge g(t) = g^{\prime}(t) + \delta$ for all $t \in [0, 1)$. Define $h^{\prime} = h - g^{\prime} \in \mathcal{T}$. The function $h^{\prime}$ must be positive for all $t \in [0, 1)$ because $h^{\prime} = h - g^{\prime} \ge g - g^{\prime} \ge \delta$ and $\delta > 0$ when $d > \frac{1}{n+1}$. Since $h^{\prime}(t)$ is always positive, $\int h^{\prime} > 0$. According to Lemma 14 in~\cite{BaekLee}, $\int h^{\prime}$ must also be a half-integer or integer, because each function in $\mathcal{T}$ is a series of additions and subtractions of functions representing unit $H$-triangles that each have an area of $\frac{1}{2}$. Yet we must also have $\int h^{\prime} > \frac{1}{2}$ due to Lemma 15 in~\cite{BaekLee}. Specifically, any function $f \in \mathcal{T}$ such that $\int f = \frac{1}{2}$ must of the form $t \mapsto \{mt + c\}$ or $t \mapsto 1 - \{mt + c\}$ for a positive integer $m$ and $c \in [0, 1)$. Both functional forms have an $x$-intercept, but $h^{\prime}(t)$ is always positive for all $t \in [0, 1)$. Hence, $\int h^{\prime} \ge 1$. Since $g^{\prime} = -t$, $\int g^{\prime} = -\frac{1}{2}$. Furthermore, since $h^{\prime} = h - g^{\prime}$ by definition, we have $h = h^{\prime} + g^{\prime}$, so we must have:

\begin{eqnarray*}
\int h = \int h^{\prime} + \int g^{\prime} \ge 1 + -\frac{1}{2} \ge \frac{1}{2}
\end{eqnarray*}

\noindent However, the function $f{_{S}}$ is defined to be equal to $\sum_{i=1}^{N} f_{_{S_i}}$ with $N = n^2 + 2$, and $r = f_{_{T_{n}}} + 1$, so $r$ also represents $n^2 + 2$ unit $H$-triangles. Hence, since $h = f_{_{S}} - r$, we must have:

\begin{eqnarray*}
\int h = \int f_{_{S}} - \int r = 0
\end{eqnarray*}

\noindent Thus, we have reached a contradiction, which proves the constraint $d \le \frac{1}{n+1}$ for the feasibility of covering a $T_{n+d}$ using only $n^2 + 2$ homothetic unit triangles.
\end{proof}

\subsection{Optimality in the Even Cases}\label{subsec:OptimalEvenCases}

In Theorem~\ref{thm:CoverEqTriangle_Optimal_Even}, we prove that \textsc{CoverTriangle} (Full Even) successfully covers a $T_{n+d}$ with $n^2 + 2p$ homothetic unit triangles for each $1 \le p \le n \in \mathbb{N}$ if $d \le \frac{p}{n+1}$, and we prove that it is infeasible to do so if $d > \frac{p}{n+1}$.

\begin{theorem}\label{thm:CoverEqTriangle_Optimal_Even}
\textsc{CoverTriangle} (Full Even) is an optimal method for covering a $T_{n+d}$ with $n^2 + 2p$ homothetic unit triangles for each $1 \le p \le n \in \mathbb{N}$.
\end{theorem}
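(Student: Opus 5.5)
The proof has two parts: achievability by \textsc{CoverTriangle} (Full Even) whenever $d \le \frac{p}{n+1}$, and infeasibility of any covering by $n^2+2p$ homothetic unit triangles whenever $d > \frac{p}{n+1}$.

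For achievability, set $j = n+1-p$, so $1 \le j \le n$ and the method uses $n^2+2n-2(j-1) = n^2+2p$ unit triangles. By the telescoping computation in the proof of Theorem~\ref{thm:CoverEqTriangle_SelectingJ_Even}, $\sum_{k=j}^{n} s_k = 1 - \frac{j}{n+1} = \frac{p}{n+1}$. Hence Theorem~\ref{thm:CoverEqTriangle_Even_Basic} covers $T_{n+d}$ for every $d \le \frac{p}{n+1}$. Equivalently, $d \le \frac{p}{n+1}$ gives $(1-d)(n+1) \ge n+1-p$, so the row selected by Theorem~\ref{thm:CoverEqTriangle_SelectingJ_Even} satisfies $\lfloor (1-d)(n+1) \rfloor \ge j$ and uses at most $n^2+2p$ triangles. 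If fewer are used, we add redundant unit triangles.

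For infeasibility, I would generalize the proof of Theorem~\ref{thm:BaekLee_Even}. After the affine transformation to $H$-triangles, suppose $N = n^2+2p$ unit triangles $S_1,\ldots,S_N$ cover $T_{n+d}$, and define the reference $r = f_{_{T_n}} + p \in \mathcal{T}$. Then $\int r = \frac{n^2+2p}{2}$, so $h = f_{_S} - r$ has $\int h = 0$. Using $f_{_{T_{n+d}}}(t) = f_{_{T_n}}(t) + nd + \max(0,d-t)$, we get
\[
g(t) = f_{_{T_{n+d}}}(t) - r(t) = nd - p + \max(0,d-t).
\]
The key choice is the comparison function. A constant lower bound does not give a contradiction, since constants are integral in $\mathcal{T}$. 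So, as in Baek and Lee, take $g' = -pt$. This lies in $\mathcal{T}$ as $p$ copies of $-\{t\}$ (plus constants, which are allowed), and $\int g' = -\frac{p}{2}$. We need $h' = h - g'$ to be strictly positive on $[0,1)$, that is,
\[
nd - p + \max(0,d-t) + pt > 0 \quad \text{for all } t \in [0,1).
\]
This expression is piecewise linear in $t$. On $[0,d)$ it equals $(n+1)d - p + (p-1)t$, which is smallest at $t=0$ with value $(n+1)d - p$. On $[d,1)$ it equals $nd - p + pt$, which is smallest at $t=d$ with value $(n+p)d - p \ge (n+1)d - p$. Hence $h' \ge \delta = (n+1)d - p > 0$ exactly when $d > \frac{p}{n+1}$.

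With $h' > 0$, Lemma~14 of~\cite{BaekLee} gives that $\int h'$ is a positive half-integer. Lemma~15 rules out $\int h' = \frac{1}{2}$, because such functions have a zero in $[0,1)$. Therefore $\int h' \ge 1$, and
\[
\int h = \int h' + \int g' \ge 1 - \frac{p}{2}.
\]
This contradicts $\int h = 0$ only when $p = 1$, and I expect this to be the main obstacle. For larger $p$ one needs $\int h' \ge \frac{p}{2} + \frac{1}{2}$ or better. I would attack it by keeping the slope information. The slope of $h'$ is determined by the up/down counts: $f_{_S}$ has slope (number of down-triangles minus number of up-triangles), $r$ has slope $-n$, and $g'$ has slope $-p$. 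Then I would prove a strengthened lemma: a function in $\mathcal{T}$ of slope $m$ with $\inf h' > 0$ has integral at least $\frac{|m|}{2}$ plus a positive amount, with matching parity, since a sawtooth of slope $m$ staying positive must sit above its jumps.

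If the slope of the covering is not controlled, I would instead choose the comparison function more cleverly, taking $g' = -\{pt\}$ or a combination of $p$ shifted sawtooth functions $-\{t - a_i\}$ matching $\max(0,d-t)$ more tightly. The aim is that positivity of $h'$ together with a Lemma~15 type bound forces $\int h' > \frac{p}{2}$, contradicting $\int h = 0$. This extension of Lemmas~14–15 to integrals around $\frac{p}{2}$ is where I expect the real work to lie.
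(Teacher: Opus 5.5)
\paragraph{Verdict.} Your achievability half is fine and agrees with the paper. Setting $j=n+1-p$, the telescoped sum of slide values is $\frac{p}{n+1}$. The infeasibility half, however, has a genuine gap for every $p\ge 2$, and the gap comes from your choice $g'=-pt$.

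\paragraph{The gap and the fix.} The slope of the comparison function does not need to grow with $p$: the extra $p$ in $r=f_{_{T_n}}+p$ only shifts the constant. Since $\max(0,d-t)\ge d-t$, you directly have $g(t)\ge -t+\bigl((n+1)d-p\bigr)$ on $[0,1)$. So keep $g'=-t$, exactly as in the $p=1$ proof. Then $h'=h+t\in\mathcal{T}$ satisfies $h'\ge g+t\ge \delta=(n+1)d-p$. On $[0,d)$ the bound $g+t$ is exactly $\delta$; on $[d,1)$ it is $nd-p+t\ge\delta$. Baek and Lee's Corollary~16 (via Lemmas~14 and~15) gives $\int h'\ge 1$ whenever $\delta>0$. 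But $\int h'=\int h+\frac12=\frac12$, which is a contradiction for every $1\le p\le n$ once $d>\frac{p}{n+1}$. This is precisely the paper's argument: same $g'=-t$, with $\delta=(n+1)d-p$. Your own piecewise computation shows the waste. On $[0,d)$ your margin is $(n+1)d-p+(p-1)t$, so you subtracted $p-1$ more units of slope than the bound requires, and each unit costs $\frac12$ in $\int g'$. That is exactly what degraded your conclusion to $\int h\ge 1-\frac p2$.

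\paragraph{Why the proposed remedies fail.} They would not close the gap as stated, and they are unnecessary. The proposed strengthened lemma (slope $m$ and $\inf>0$ force integral $>\frac{|m|}{2}$) is false. For example, $\{t\}+\{t-\frac13\}+\{t-\frac23\}=1+\{3t\}\in\mathcal{T}$ has slope $3$ and is bounded below by $1$, yet its integral is exactly $\frac32$. Likewise $1+\{5t\}=\sum_{i=0}^{4}\{t-\frac i5\}-1$ has slope $5$ and integral $\frac32<\frac52$. Moreover, as you note yourself, the slope of $h'$ depends on the up/down counts of the unknown covering, so it cannot be controlled anyway. No extension of Lemmas~14 and~15 beyond ``positive lower bound implies integral at least $1$'' is needed.

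\paragraph{A smaller point.} A constant comparison function $g'=0$ does yield a contradiction, contrary to your remark. It does so only at the weaker threshold $d>\frac pn$, which is the version the paper uses in the odd cases.
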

\begin{proof}
To generalize the proof of Theorem~\ref{thm:BaekLee_Even}, we let $N = n^2 + 2p$ for $1 \le p \le n \in \mathbb{N}$ so that the $S_1, \ldots, S_N$ unit $H$-triangles assumed to cover a $T_{n+d}$ are $n^2 + 2p$ in number. Next, we amend the definition of the reference function $r$ to be $r = f_{_{T_{n}}} + p$ such that $r \in \mathcal{T}$ and $\int r = \frac{n^2 + 2p}{2}$ because a $T_n$ has area $\frac{n^2}{2}$ and each extra pair of unit $H$-triangles adds $1$ to the area. This change to $r$ has the effect of substituting $p$ for $1$ in the relevant places of the proof of Theorem~\ref{thm:BaekLee_Even}. In particular, we have the following amended calculations for $g$:

\begin{eqnarray*}
g(t) &=& f_{_{T_{n+d}}}(t) - (f_{_{T_{n}}}(t) + p)\\
     &\ge& -t + ((n+1)d - p)
\end{eqnarray*}

\noindent such that the $\delta > 0$ needed for applicability of Corollary 16 in~\cite{BaekLee} is based on the assignment $\delta = ((n+1)d - p)$. Hence, we have the following upper bound for $d$, above which the proof of Theorem~\ref{thm:BaekLee_Even} establishes a contradiction to the feasibility of covering a $T_{n+d}$ with $n^2 + 2p$ homothetic unit triangles:

\begin{eqnarray*}
d > \frac{p}{n+1}
\end{eqnarray*}

From the proof of Theorem~\ref{thm:CoverEqTriangle_SelectingJ_Even}, \textsc{CoverTriangle} (Full Even) has the following upper bound for $d$:

\begin{eqnarray*}
d \le 1 - \frac{j}{n+1}
\end{eqnarray*}

\noindent Observing that $p = n - j + 1$ and simplifying, we have that \textsc{CoverTriangle} (Full Even) can cover a $T_{n+d}$ with $n^2 + 2p$ homothetic unit triangles when:

\begin{eqnarray*}
d \le \frac{p}{n+1}
\end{eqnarray*}

Thus, the upper bounds in this proof are tight, and \textsc{CoverTriangle} (Full Even) is an optimal covering method when adding pairs of homothetic unit triangles. The theorem holds for all triangles via affine transformations.
\end{proof}

\section{The Smallest Odd Case}\label{sec:ThreeExtraUnitTriangles}

Now, we transition to considering the odd cases, i.e., the cases of covering a $T_{n+d}$ with $n^2 + 2p + 1$ homothetic unit triangles. We return to our focus on equilateral triangles, except when we must transform to $H$-triangles inside the proofs based on the Baek and Lee proof technique. In fact, even Baek and Lee present their \emph{method} for the $n^2+3$ case using equilateral triangles. We observe that although their method for the $n^2+3$ case is optimal, the method does not generalize to an optimal method for any higher odd cases, when $p > 1$. Their main idea is to amend the behavior of the first Conway-Soifer method to add 3 homothetic unit triangles to the last trapezoidal row, rather than adding only 2. An example of this technique appears in Figure~\ref{fig:BaekLee_OddCases}. In the $n^{th}$ row, the up-left slide operation is still performed but only on the first $n-1$ down-triangles because the last down-triangle and the 3 extra unit triangles form a $T_2$ with no sliding.

\begin{figure}[!ht]
\centering
\includegraphics[width=3.3in,height=1.5in]{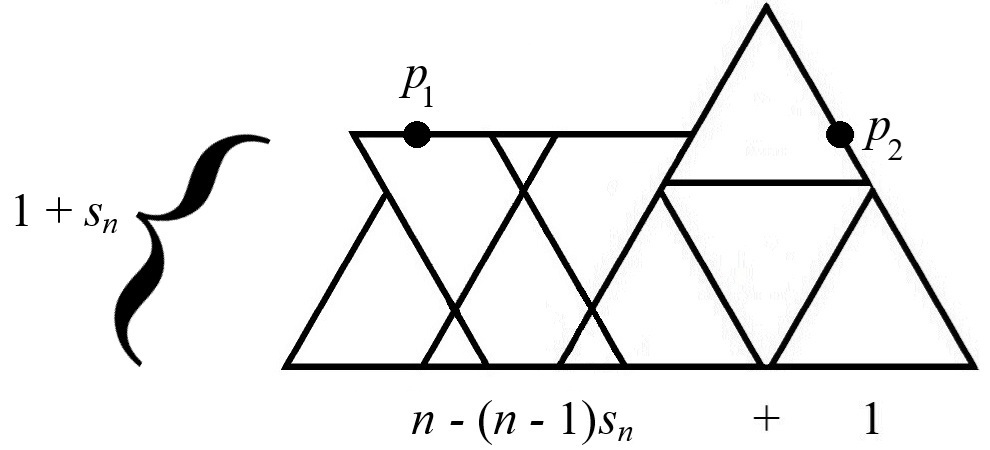}

\caption{This figure shows the structure of the $n^{th}$ (bottom) row when 3 extra unit triangles are appended, instead of just the 2 that are added in the first Conway-Soifer method. For a small enough value of $d$, a $T_{n+d}$ covering is completed by placing a $T_{n-1}$ atop this structure. For example, in this depiction, $n = 3$, so a $T_2$ would be positioned with its base spanning from $p_1$ to $p_2$.
\label{fig:BaekLee_OddCases}}
\end{figure}

For the $n^2+3$ case, there are no extra unit triangles to add to the rows above $n$, so a $T_{n-1}$ covering is placed above the structure in Figure~\ref{fig:BaekLee_OddCases} with its left diagonal aligned with that of the leftmost up-triangle in Figure~\ref{fig:BaekLee_OddCases} and its base overlapping horizontal leg of the leftmost triangle starting at point $p_1$ and extending to the point $p_2$ in Figure~\ref{fig:BaekLee_OddCases}.

We can solve for the up-left slide value $s_n$ to use in the $n^{th}$ row based on equating the length of the left diagonal with the triangle base as follows:

\begin{eqnarray*}
n - 1 + (1 + s_n) = n - (n - 1)s_n + 1
\end{eqnarray*}

\noindent From this, we find that $s_n = \frac{1}{n}$, and therefore the method can cover a $T_{n+d}$ with $n^2+3$ homothetic unit triangles if $d \le \frac{1}{n}$. The larger up-left slide value obtained with 3 extra unit triangles (instead of 2) shortens the width of base covered by the first $n$ up-triangles, but the shortage is accommodated by the width of the $T_2$ (because the last two up-triangles don't overlap).

As shown in Theorem~\ref{thm:BaekLee_Odd}, this method is optimal for the $n^2+3$ case using a variant of the proof for the $n^2+2$ case.

\begin{theorem}[Baek and Lee~\cite{BaekLee}]\label{thm:BaekLee_Odd}
If $d > \frac{1}{n}$, then a triangle $T_{n+d}$ having a selected side length of $n+d$ cannot be covered by $n^2 + 3$ homothetic unit triangles.
\end{theorem}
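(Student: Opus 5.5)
The plan is to rerun the Baek--Lee argument from the proof of Theorem~\ref{thm:BaekLee_Even}, changing only the reference function $r$ and the auxiliary function $g^{\prime}$. Because $N = n^2+3$ is odd, the reference $r$ must now have area $\frac{n^2+3}{2}$, which is a half-integer rather than an integer. So $r$ cannot simply be $f_{_{T_n}}$ plus a constant. I would take
\[
r = f_{_{T_n}} + 1 + \bigl(1 - \{t - d\}\bigr),
\]
which is $f_{_{T_n}}$ plus one extra pair of unit $H$-triangles plus one extra up-triangle whose generating shift is placed at $a = d$. Then $r \in \mathcal{T}$ and $\int r = \frac{n^2}{2} + 1 + \frac{1}{2} = \frac{n^2+3}{2}$. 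The point of placing the discontinuity at $t=d$ is that it aligns with the kink of $\max(0,d-t)$ in $f_{_{T_{n+d}}}$.

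Next I would assume for contradiction that $S_1,\ldots,S_N$ cover the $T_{n+d}$ (after affine transformation to an $H$-triangle). I set $f_{_S}=\sum f_{_{S_i}}$, $g = f_{_{T_{n+d}}} - r$ and $h = f_{_S} - r$, so that $h \ge g$ pointwise and $\int h = 0$. Using $f_{_{T_{n+d}}} = f_{_{T_n}} + nd + \max(0,d-t)$, we get
\[
g(t) = nd - 2 + \max(0,d-t) + \{t-d\}.
\]
I then define $g^{\prime}(t) = \{t-d\} - 1$. This lies in $\mathcal{T}$, since it equals $-(1-\{t-d\})$, and $\int g^{\prime} = -\frac{1}{2}$. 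This gives
\[
g(t) - g^{\prime}(t) = nd - 1 + \max(0,d-t) \ge nd - 1 =: \delta ,
\]
and $\delta > 0$ exactly when $d > \frac{1}{n}$.

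From there the argument is identical to the even case. Let $h^{\prime} = h - g^{\prime} \in \mathcal{T}$. Then $h^{\prime} \ge g - g^{\prime} \ge \delta > 0$ on $[0,1)$, so $\int h^{\prime} > 0$. By Lemma 14 of~\cite{BaekLee}, $\int h^{\prime}$ is a multiple of $\frac{1}{2}$. By Lemma 15 of~\cite{BaekLee}, a function in $\mathcal{T}$ with integral exactly $\frac{1}{2}$ has the form $\{mt+c\}$ or $1-\{mt+c\}$ and hence takes the value $0$, which $h^{\prime}$ does not. Therefore $\int h^{\prime} \ge 1$, and
\[
\int h = \int h^{\prime} + \int g^{\prime} \ge 1 - \tfrac{1}{2} = \tfrac{1}{2}.
\]
This contradicts $\int h = \int f_{_S} - \int r = \frac{n^2+3}{2} - \frac{n^2+3}{2} = 0$.

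The main obstacle is the first step, choosing the odd reference function. The extra lone unit triangle must be positioned, through its shift $a$, so that $g - g^{\prime}$ is bounded below by the sharp constant $nd-1$ rather than something weaker. A careless choice, such as $a=0$ or using a down-triangle, yields only the bound $d > \frac{1}{n+1}$, or no contradiction at all. I would therefore check the two regimes $t<d$ and $t\ge d$ separately. In both, the $\{t-d\}$ terms cancel and leave $nd-1$ as the lower bound, with an extra $d-t$ when $t<d$; this confirms the threshold $\frac{1}{n}$, which matches the construction preceding the theorem.
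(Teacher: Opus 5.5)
Your proof is correct and is the paper's argument with different bookkeeping. After the lone-triangle terms cancel, your $h^{\prime} = f_{_S} - f_{_{T_n}} - 1$ is exactly the paper's $h$ (with $r = f_{_{T_n}} + 1$, $g^{\prime} = 0$, $\delta = nd-1$), and your contradiction $\int h^{\prime} \ge 1$ versus $\int h^{\prime} = \frac{1}{2}$ is the paper's $\int h \ge 1$ versus $\int h = \frac{1}{2}$.

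One correction to your closing commentary: the extra term $1-\{t-d\}$ cancels identically in $g - g^{\prime}$, as you yourself observe. So its shift $a$ and its orientation play no role, and the claim that $a=0$ or a down-triangle would weaken the bound to $\frac{1}{n+1}$ is mistaken. Any choice works, provided $g^{\prime}$ is taken to be the negative of that same term.
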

\begin{proof}
Referring to the notations and definitions in and supporting Theorem~\ref{thm:BaekLee_Even}, let $N = n^2 + 3$ so that the $S_1, \ldots, S_N$ unit $H$-triangles assumed to cover a $T_{n+d}$ has one more unit $H$-triangle than in the $n^2+2$ case. The reference function $r$ is unchanged, i.e., it is still assigned to be $r = f_{_{T_{n}}} + 1$ (because we must have $r \in \mathcal{T}$ so we cannot add $\frac{1}{2}$ to $r$ because $\frac{1}{2} \notin \mathcal{T}$). Therefore, since $r$ still represents only $n^2 + 2$ unit $H$-triangles but $f_{_{S}}$ now represented $n^2 + 3$ unit $H$-triangles, we expect that $\int h = \int f_{_{S}} - \int r = \frac{1}{2}$. However, from the proof of Theorem~\ref{thm:BaekLee_Even}, we still have:

\begin{eqnarray*}
h(t) \ge g(t) \ge nd - 1 + max(0, d-t)
\end{eqnarray*}

\noindent For the $n^2+3$ case, is advantageous to replace the term $max(0, d-t)$ with 0 because that is the maximum it contributes to the righthand side of the inequality. Then, with the assignments $g^{\prime} = 0$ and $\delta = nd - 1$, the logical steps of Corollary 16, Lemma 14 and Lemma 15 in~\cite{BaekLee} give the expectation that $\int h \ge 1$, contradicting the preceding expectation that $\int h = \frac{1}{2}$. Thus, it is infeasible to cover a $T_{n+d}$ with only $n^2 + 3$ homothetic unit triangles when $d > \frac{1}{n}$.
\end{proof}

\section{New Triangle Covering Method (Odd Cases)}\label{sec:NewTriangleCoveringMethod_Odd}

In Theorem~\ref{thm:Generalized_Odd}, we generalize the logic for the $n^2+3$ case that appeared in Theorem~\ref{thm:BaekLee_Odd} to establish upper bounds for all odd cases.

\begin{theorem}\label{thm:Generalized_Odd}
If $d > \frac{p}{n}$ (with $1 \le p < n \in \mathbb{N}$ and $d \in (0, 1)$), then a triangle $T_{n+d}$ having a selected side length of $n+d$ cannot be covered by $n^2 + 2p + 1$ homothetic unit triangles.
\end{theorem}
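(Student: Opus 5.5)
We follow the proof of Theorem~\ref{thm:BaekLee_Odd}, now using the reference function from the proof of Theorem~\ref{thm:CoverEqTriangle_Optimal_Even}. Affine-transform the $T_{n+d}$ into an $H$-triangle. Suppose, for contradiction, that $N = n^2 + 2p + 1$ homothetic unit $H$-triangles $S_1, \ldots, S_N$ cover it, and let $f_{_S} = \sum_{i=1}^{N} f_{_{S_i}}$.

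Take the reference function $r = f_{_{T_n}} + p$. This lies in $\mathcal{T}$, because the constant $1$ does and $\mathcal{T}$ is a group, so $p \in \mathcal{T}$. We have $\int r = \frac{n^2+2p}{2}$. As in the $n^2+3$ case, we cannot absorb the extra half unit into $r$, since $\frac{1}{2} \notin \mathcal{T}$. Define $g = f_{_{T_{n+d}}} - r$ and $h = f_{_S} - r$. Counting areas gives
\[
\int h = \frac{n^2+2p+1}{2} - \frac{n^2+2p}{2} = \frac{1}{2}.
\]

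Using the identity $f_{_{T_{n+d}}}(t) = f_{_{T_n}}(t) + nd + \max(0,d-t)$ from Section~\ref{subsec:BaekLeeMachinery}, we get
\[
g(t) = nd - p + \max(0, d-t) \ge nd - p .
\]
Here we discard the $\max$ term, exactly as Theorem~\ref{thm:BaekLee_Odd} did. Set $g' = 0$ and $\delta = nd - p$, which is positive precisely when $d > \frac{p}{n}$. The covering assumption gives $h(t) \ge g(t) \ge \delta > 0$ for all $t \in [0,1)$, so $h' = h - g' = h$ is strictly positive on $[0,1)$.

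Next we apply Lemma~14 and Lemma~15 of~\cite{BaekLee}, which are the ingredients of Corollary~16. By Lemma~14, $\int h'$ is a nonnegative half-integer, and it is positive because $h'$ is positive. By Lemma~15, any $f \in \mathcal{T}$ with $\int f = \frac{1}{2}$ has the form $\{mt+c\}$ or $1-\{mt+c\}$, and such a function attains the value $0$ on $[0,1)$. Since $h' > 0$ everywhere, $\int h' \neq \frac{1}{2}$, so $\int h' \ge 1$. Because $\int g' = 0$, this gives $\int h \ge 1$, contradicting $\int h = \frac{1}{2}$. Hence no covering exists when $d > \frac{p}{n}$.

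The main point needing care is the choice of lower bound for $g$. One could instead keep the $-t$ term, as in the even case, with $g' = -t$ and $\delta = (n+1)d - p$. That route gives only $\int h \ge \frac{1}{2}$, which produces no contradiction. Dropping the $\max$ term costs nothing in the constant (we get $nd - p$ instead of $(n+1)d - p$ minus $t$) and raises the integral bound to $1$. Two further checks are routine: $r$ and $g' = 0$ must lie in $\mathcal{T}$, and the hypothesis $p < n$ only ensures $\frac{p}{n} < 1$, so the statement is not vacuous. The argument itself does not otherwise depend on $p < n$.
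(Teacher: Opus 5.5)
Your proposal is correct and follows the paper's proof essentially line for line: you take $r = f_{_{T_n}} + p$, drop the $\max$ term so that $g' = 0$ and $\delta = nd - p$, and contradict $\int h = \frac{1}{2}$ with $\int h \ge 1$ via Lemmas~14--15 of Baek and Lee. One small repair is needed: $1-\{mt+c\}$ never attains $0$ on $[0,1)$ (the up-triangle function $1-\{t-a\}$ is strictly positive with integral $\frac{1}{2}$), so ruling it out must rest on your uniform bound $h' \ge \delta > 0$, since both forms have infimum $0$, rather than on $h'$ being pointwise positive.
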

\begin{proof}
Referring to the notations and definitions in and supporting Theorem~\ref{thm:BaekLee_Odd}, let $N = n^2 + 2p + 1$ so that the $S_1, \ldots, S_N$ unit $H$-triangles assumed to cover a $T_{n+d}$ has $n^2 + 2p + 1$ homothetic unit triangles. Let $r = f_{_{T_{n}}} + p$ as we did in Theorem~\ref{thm:CoverEqTriangle_Optimal_Even}. This assignment to $r$ results in the assignment $\delta = nd - p$. Provided that $\delta > 0$, the desired contradiction is reached between the expectations of $\int h \ge 1$ and $\int h = \frac{1}{2}$, just as in Theorem~\ref{thm:BaekLee_Odd}. Thus, it is infeasible to cover a $T_{n+d}$ (of any shape, via affine transformations) using only $n^2 + 2p + 1$ homothetic unit triangles if $d > \frac{p}{n}$.
\end{proof}

To finish answering the open question from~\cite{BaekLee}, we must now show that the upper bounds in Theorem~\ref{thm:Generalized_Odd} are tight for each value of $p$ by presenting a covering method that can meet them.

One possible approach to covering a $T_{n+d}$ with $n^2 + 2p + 1$ homothetic unit triangles when $p > 1$ is to add a pair of unit triangles to each trapezoidal row above the $n^{th}$ row, plus the 3 unit triangles on the $n^{th}$ row as in Figure~\ref{fig:BaekLee_OddCases}. It is an interesting exercise to derive the threshold for $d$ that this method achieves. By using the same slide values as in \textsc{CoverTriangle} (Full Even) for each row $j$ before $n$, one can derive that the $n^{th}$ row receives the slightly wider slide value of $s_n = \frac{j}{n^2}$. For example, Figure~\ref{fig:BaekLee_OddCases} is drawn with $n = 3$, so the slide value depicted is $s_3 = \frac{1}{3}$, but if row $j=2$ were to also receive 2 extra unit triangles, then $s_2 = \frac{1}{3}$ and $s_3 = \frac{2}{9}$. Given the slide values on and before the last row, one can derive the threshold for $d$ to be:

\begin{eqnarray*}
d \le 1 - \left(\frac{j(n-1)}{n^2}\right)
\end{eqnarray*}

\noindent Observing that $p = n - j + 1$ and simplifying, we find that this method can cover a $T_{n+d}$ with $n^2 + 2p + 1$ homothetic unit triangles when:

\begin{eqnarray*}
d \le \frac{p}{n} - \frac{p-1}{n^2}
\end{eqnarray*}

\noindent Due to the subtrahend, we are not able to achieve a contradiction using the logic of Theorem~\ref{thm:Generalized_Odd} because the value of $\delta$ is not greater than 0 except when $p=1$ (i.e., except for the $n^2+3$ case). This does not necessarily mean that the method is non-optimal, but the upper bounds from Theorem~\ref{thm:Generalized_Odd} do \emph{suggest} that there may be a better method.

A better covering method can indeed be achieved by using the 3 extra unit triangles in the first row $j > 1$ that receives any extra unit triangles, and then by using 2 extra unit triangles on trapezoidal rows $j+1$ to $n$. It is a more interesting exercise to derive the threshold for $d$ for this alternative method. An inductive proof of the efficacy of this covering method can be achieved based on using a slide value of $s_j = \frac{1}{j}$ for the $j^{th}$ row (the one with 3 extra unit triangles), and $s_k = \frac{j^2 - 1}{j^2} \times \frac{j}{k(k+1)}$ for each row $j < k \le n$. Noting again that $p = n - j + 1$, one can derive that the upper bound for $d$ in terms of $p$ for this method is:

\begin{eqnarray*}
d \le \frac{p}{n+1} + \frac{1}{(n+1)(n-p+1)}
\end{eqnarray*}

\noindent This upper bound for $d$ is also not large enough to support reaching a contradiction because once again $\delta$ is not positive when $p > 1$. However, it is an improved upper bound, which suggests that we have moved in the proper direction by moving upward with the row in which the 3 extra unit triangles are added.

\subsection{The Main Idea (Odd Cases)}\label{subsec:MainIdeaMethod_Odd}

Continuing to move upward with the row in which to add 3 extra unit triangles presents a challenge because the only higher row is row 1, the top triangular row. If the 3 extra unit triangles are applied to row 1, then the calculated slide value is 1, which means that the top triangular row becomes a $T_2$. This is problematic because then each lower row then requires 2 extra unit triangles and \emph{no sliding} just to be wide enough to fit with the row above it. Taking this out to the $n^{th}$ row, we get a total of $n^2 + 2n + 1$ unit triangles. In other words, we get the na\"{\i}ve covering method. The challenge, then, is to make trapezoidal rows wider \emph{without} adding 2 extra unit triangles to them. To solve this challenge, we exploit a technique appearing in the second Conway-Soifer method in~\cite{ConwaySoiferGeom, ConwaySoifer}, which uses a \textbf{\emph{down-right slide operation}} to make trapezoidal rows that are shorter but wider. In this operation, each down-triangle slides downward and rightward by the slide value, which also pushes all subsequent unit triangles rightward. The main idea of the new covering method for odd cases consists of the following elements:

\begin{enumerate}
  \item Let row 1 be a $T_2$ covering (using 3 extra unit triangles, i.e., 1 pair plus 1 more).
  \item For each row $k$ from $2$ to $n-1$, add an extra down-triangle and up-triangle.
  \item For row $n$, do not add an extra pair of unit triangles, but perform the down-right slide operation with a slide value of $\frac{1}{n}$.
\end{enumerate}

An example of the above method in operation appears in Figure~\ref{fig:NewTriangleCoverMethod_OddCases}. The top row, which is row 1, has a side length of 2 because it receives 3 extra unit triangles that are marked +1, +2 and +3. The trapezoidal rows starting at row 2 receive 2 extra unit triangles, marked +1 and +2. With no sliding, the extra 2 unit triangles make each of these trapezoidal rows wide enough to fit with the row above it. To simplify the main idea, we have only applied the down-right slide operation to the last row.

\begin{figure}[!ht]
\centering

\includegraphics[width=4in,height=3.3in]{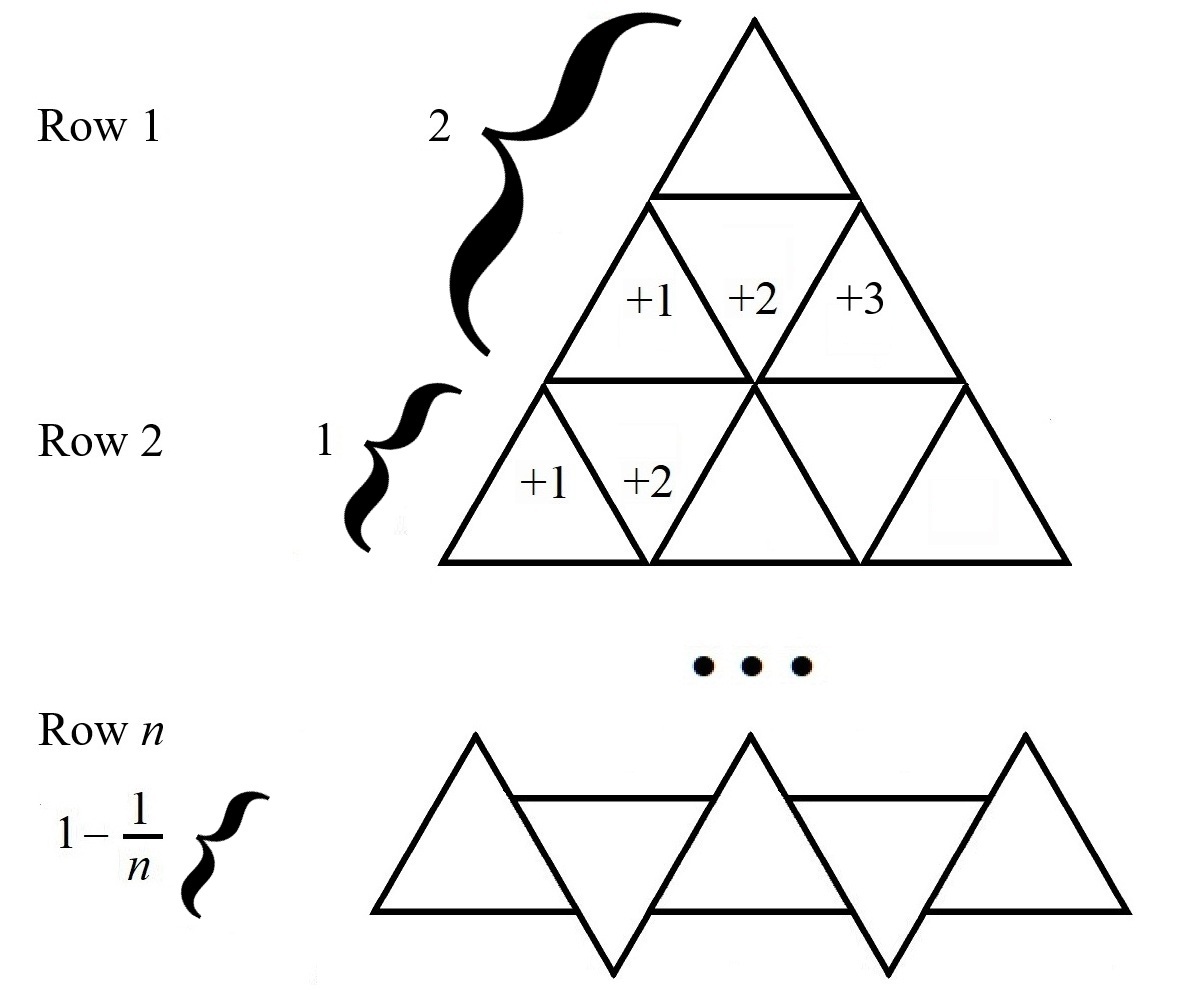}

\caption{Depiction of the main idea for covering a large triangle $T_{n+d}$ with an odd number of extra unit triangles. In this depiction, $n=3$, $d = \frac{2}{3}$, and $n^2+5$ unit triangles are used so that only the last row is made wider by the down-right slide operation rather than by adding an extra pair of unit triangles to it.
\label{fig:NewTriangleCoverMethod_OddCases}}
\end{figure}

When the down-right slide operation has been applied (see row $n$ in Figure~\ref{fig:NewTriangleCoverMethod_OddCases}), \textbf{\emph{down-teeth}} are the parts of down-triangles that jut below the baseline containing the bases of the up-triangles. Similarly, \textbf{\emph{up-teeth}} are the parts of up-triangles that jut above the line containing the horizontal legs of all down-triangles in the row.

The key insight is that the $n^{th}$ trapezoidal row only needs to cover a trapezoid with diagonal length $d$ and base length $n+d$ because the first $n-1$ rows of the covering have the size of a $T_n$. The base of the $n^{th}$ trapezoidal row already includes 1 unit of width for each of $n$ up-triangle bases, and it also includes the horizontal widths of $n-1$ down-teeth. Hence, the extra width $d$ can be spread across the $n-1$ down-teeth, yielding a slide value of $\frac{d}{n-1}$. Based on this slide value, the $n^{th}$ trapezoidal row can fully cover a trapezoid with diagonal length $1 - \frac{d}{n-1}$. Since the diagonal length must also be $d$, we solve $d =  1 - \frac{d}{n-1}$ to obtain $d = 1 - \frac{1}{n}$ as the largest value for $d$ that can be accommodated when using the down-right slide operation only on the $n^{th}$ row. For example, in Figure~\ref{fig:NewTriangleCoverMethod_OddCases}, the last row represents $n=3$, so the down-right slide value depicted is $\frac{1}{3}$. The diagonal length is $d = 1 - \frac{1}{3}$, and there are 2 down-triangles whose down-teeth contribute a total of $d = \frac{2}{3}$ to the base width of the $n^{th}$ row.

To better understand the upcoming generalization of this method, note that the width of the top of the $n^{th}$ row exactly matches the width of the bottom of the $T_n$ formed by the first $n - 1$ rows of the covering. Specifically, the width of the top of the $n^{th}$ row consists of the unit length horizontal legs of $n-1$ down-triangles, plus the horizontal width of $n$ up-teeth, each of width $\frac{1}{n}$. We can also see that this approach is optimal for the odd case $n^2 + 2n - 1$ based on Theorem~\ref{thm:Generalized_Odd} because we have added $p = n - 1$ pairs of unit triangles, one pair per row except the $n^{th}$ row. Hence, we have

\begin{eqnarray*}
d = 1 - \frac{1}{n} = \frac{n-1}{n} = \frac{p}{n}
\end{eqnarray*}

\subsection{Using Fewer Pairs of Unit Triangles (Odd Cases)}\label{subsec:UsingFewerUnitTriangles_Odd}

To generalize the method of the prior subsection to all odd cases, we choose the row $j$ on which to start performing the down-right slide operation in lieu of adding 2 unit triangles, and we define the magnitude of the slide value used on each row that does not receive 2 extra unit triangles. Specifically, the term \textbf{\textsc{CoverTriangle} (Basic Odd)} refers to the triangle covering method having the following elements:

\begin{enumerate}
  \item Let row 1 be a $T_2$ covering (using 3 extra unit triangles, i.e., 1 pair plus 1 more).
  \item For each row $k$ from $2$ to $j-1$, add an extra down-triangle and up-triangle.
  \item For each row $k$ from $j$ to $n$, do not add an extra pair of unit triangles, and instead perform the down-right slide operation with a slide value of $\frac{j-1}{k(k-1)}$
\end{enumerate}

\begin{theorem}\label{thm:CoverEqTriangle_Odd_Basic}
Given a triangle $T_{n+d}$ having a selected side length of $n+d$, with $n \in \mathbb{N}$ and $d \in (0, 1)$, and a row $1 < j \le n \in \mathbb{N}$, \textsc{CoverTriangle} (Basic Odd) covers the $T_{n+d}$ if $d \le 1 - \sum_{k=j}^{n} s_k$.
\end{theorem}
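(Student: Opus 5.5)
The plan is to mirror the inductive argument of Theorem~\ref{thm:CoverEqTriangle_Even_Basic}, with the up-left slide replaced by the down-right slide. We work with the equilateral $T_{n+d}$, after an affine transformation if needed. Let $D_k$ denote the side length of the equilateral triangle that rows $1$ to $k$ continuously cover when stacked with a common left diagonal and apex at the top. The statement then reduces to showing $D_n = n + 1 - \sum_{k=j}^{n} s_k$: if $d \le 1 - \sum_{k=j}^{n} s_k$, the $T_{n+d}$ fits inside the covered $T_{D_n}$.

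\emph{Rows above $j$.} Row 1 is a $T_2$ covering. For $2 \le k \le j-1$, row $k$ has $k+1$ up-triangles and $k$ down-triangles and is not slid. It is therefore a trapezoid with top width $k$, base width $k+1$ and unit diagonals, which fits exactly beneath a $T_k$. By a trivial induction, rows $1$ to $j-1$ form a $T_j$ covering, so $D_{j-1} = j$.

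\emph{Rows $j$ to $n$.} Row $k \ge j$ has only the original $k$ up-triangles and $k-1$ down-triangles, and the down-right slide is applied with value $s_k$. Its up-triangle bases plus the $k-1$ down-teeth give a base width of $k + (k-1)s_k$. Its top is the $k-1$ unit down-triangle legs plus $k$ up-teeth of width $s_k$, giving $k-1 + k s_k$. The diagonal length of the fully covered trapezoid is $1 - s_k$. Hence $D_k = D_{k-1} + 1 - s_k$, so $D_k = k + 1 - \sum_{i=j}^{k} s_i$. Two conditions must hold for each $k$:
\begin{itemize}
\item the top of row $k$ is at least the base $D_{k-1}$ of the triangle above it;
\item the base of row $k$ is at least $D_k$.
\end{itemize}

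I will prove both with equality by induction on $k$, paralleling the even case, with inductive hypothesis $s_i = \frac{j-1}{i(i-1)}$ for $j \le i < k$. The base condition $k + (k-1)s_k = k + 1 - \sum_{i=j}^{k} s_i$ rearranges to $k s_k = 1 - \sum_{i=j}^{k-1} s_i$. Partial fractions and telescoping give $\sum_{i=j}^{k-1} \frac{j-1}{i(i-1)} = 1 - \frac{j-1}{k-1}$, so $s_k = \frac{j-1}{k(k-1)}$. The same telescoped sum shows $D_{k-1} = k-1 + \frac{j-1}{k-1} = k-1 + k s_k$, which is exactly the top width of row $k$. The anchor $k = j$ gives $s_j = \frac{1}{j}$ and top width $j = D_{j-1}$, which is consistent; it is also the case $j = n$ already verified in Subsection~\ref{subsec:MainIdeaMethod_Odd}.

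\emph{Main obstacle.} The delicate step is geometric rather than algebraic. I must justify that, after the down-right slide, the region continuously covered by row $k$ really is a trapezoid of diagonal $1 - s_k$, with base and top widths as claimed. This requires the down-teeth to fill the base between consecutive up-triangles and the up-teeth to fill the top between consecutive down-triangles, with no gaps at the two ends. I would argue this by tracking one down-triangle and its two neighboring up-triangles, exactly as in the description of the main idea for row $n$, and noting that $s_k \le s_j = \frac{1}{j} < 1$ so the slides are admissible. The theorem then follows for all triangles via affine transformations.
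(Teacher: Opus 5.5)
Your proposal is correct and follows essentially the same route as the paper. Both arguments show that rows $1$ to $j-1$ form a $T_j$ and that each slid row's top width matches the base of what lies above it: the paper compares the closed forms $k+\frac{j-1}{k}$ for the bottom of row $k$ and the top of row $k+1$ directly, while you track the cumulative diagonal $D_k = k+1-\sum_{i=j}^{k} s_i$ and telescope. Your version has the small added value of explicitly deriving the final side length $n+1-\sum_{k=j}^{n} s_k$, which the paper only asserts, and of flagging the geometric check on the slid trapezoid; that check does go through, because each down-triangle slides along the edge it shares with its preceding up-triangle and the next up-triangle is pushed right by exactly $s_k$.
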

\begin{proof}
All rows up to and including $j-1$ form a $T_j$, which has a base length of $j$. For the first row in which the down-right slide operation is used, i.e. for $k=j$, the top of row $k$ also has width $j$ because it contains the horizontal legs of $j-1$ down-triangles plus $j$ up-teeth each of width $\frac{j-1}{k(k-1)}$, which have a total width of 1 since $k=j$. The widths of the bottom of each row $k \ge j$ and the top of row $k+1$ are also equal. Specifically, the bottom of row $k$ has a width consisting of $k$ up-triangle bases and $k-1$ down-teeth widths, as follows:

\begin{eqnarray*}
k + (k-1) \times \frac{j-1}{k(k-1)} = k + \frac{j-1}{k}
\end{eqnarray*}

\noindent By comparison, the top of row $i = k+1$ has a width consisting of the horizontal leg widths of $i-1$ down-triangles plus $i$ up-teeth widths, as follows:

\begin{eqnarray*}
i-1 + i \times \frac{j-1}{i(i-1)} = k + \frac{j-1}{k}
\end{eqnarray*}

\noindent Hence, \textsc{CoverTriangle} (Basic odd) fully covers any equilateral triangle $T_{n+d}$ provided $d$ is not too large. We must have $j \ge 2$ because the first 3 extra unit triangles are added on row 1, so row 2 is the earliest row on which to start performing down-right slide operations. The value of $d$ must not exceed the value of 1 minus the sum of the down-right slide values used on rows $j$ through $n$ of the covering that the \textsc{CoverTriangle} (Basic Odd) method generates. Therefore, we have the constraint $d \le 1 - \sum_{k=j}^{n} s_k$. The theorem follows for all triangles via affine transformations.
\end{proof}

\subsection{Selecting the Best \texorpdfstring{$j$}{\emph{j}} (Odd Cases)}\label{subsec:SelectingJ_Odd}

It is possible to poorly select the row $j$ on which \textsc{CoverTriangle} (Basic Odd) transitions to using down-right slide operations. Given a particular $T_{n+d}$ to cover, it is both necessary and possible to use the values of $n$ and $d$ to calculate the best row $j$ to select that minimizes the odd number of additional unit triangles that are added by \textsc{CoverTriangle} (Basic Odd). Specifically, Theorem~\ref{thm:CoverEqTriangle_SelectingJ_Odd} characterizes the minimum value of $j$ that can be selected to ensure that \textsc{CoverTriangle} (Basic Odd) covers a given $T_{n+d}$.

\begin{theorem}\label{thm:CoverEqTriangle_SelectingJ_Odd}
A triangle $T_{n+d}$ having a selected side length of $n+d$, with $n \in \mathbb{N}$ and $d \in (0, 1)$, can be covered by $n^2 + 2(j-1) + 1$ homothetic unit triangles with $j = \lceil dn + 1 \rceil$ and not with $j = \lceil dn + 1 \rceil - 1$.
\end{theorem}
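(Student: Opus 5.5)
The plan mirrors the proof of Theorem~\ref{thm:CoverEqTriangle_SelectingJ_Even}. Start from the feasibility condition of Theorem~\ref{thm:CoverEqTriangle_Odd_Basic}, $d \le 1 - \sum_{k=j}^{n} s_k$, with $s_k = \frac{j-1}{k(k-1)}$. Evaluate the sum in closed form by partial fractions. Since $\frac{1}{k(k-1)} = \frac{1}{k-1} - \frac{1}{k}$, the sum telescopes:
\begin{eqnarray*}
\sum_{k=j}^{n} \frac{j-1}{k(k-1)} &=& (j-1)\left(\frac{1}{j-1} - \frac{1}{n}\right) = 1 - \frac{j-1}{n}.
\end{eqnarray*}
The covering condition therefore becomes simply $d \le \frac{j-1}{n}$, which is equivalent to $j \ge dn + 1$.

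Next, account for the number of unit triangles used. Row 1 receives 3 extra triangles, rows $2$ through $j-1$ each receive 2 extra, and rows $j$ through $n$ receive none. The total is therefore $n^2 + 3 + 2(j-2) = n^2 + 2(j-1) + 1$, matching the count in the statement. Minimizing this count amounts to choosing the smallest integer $j$ with $j \ge dn+1$, namely $j = \lceil dn + 1 \rceil$. Any smaller integer, in particular $\lceil dn+1\rceil - 1$, is strictly less than $dn+1$ and so violates $d \le \frac{j-1}{n}$. Under the Basic Odd construction, that choice leaves a coverage gap.

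The step I expect to need care is the range of $j$. Theorem~\ref{thm:CoverEqTriangle_Odd_Basic} requires $1 < j \le n$. Because $d > 0$, we have $\lceil dn+1\rceil \ge 2$, so the lower end is fine. The upper end can fail: if $d > \frac{n-1}{n}$, then $\lceil dn+1\rceil = n+1$. In that case I would interpret the construction as adding pairs on all rows $2,\dots,n$ with no down-right sliding. This uses $n^2 + 2n + 1$ triangles, which is exactly the na\"{\i}ve covering, consistent with the formula. I would remark on this case the way the even-case proof handled $j=0$.

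For the ``not with'' half, I would also note the agreement with Theorem~\ref{thm:Generalized_Odd}. Writing $p = j-1$, the choice $j-1$ corresponds to $p' = j-2 < dn$, so $d > \frac{p'}{n}$. When $1 \le p' < n$, Theorem~\ref{thm:Generalized_Odd} then shows that no method at all covers the triangle with that count, which is stronger than the claimed failure of the method. The statement itself only asserts failure of the method's selection, and that follows from the inequality above. Finally, the result extends to all triangle shapes via affine transformations.
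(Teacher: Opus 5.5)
Your proposal is correct and follows essentially the same route as the paper's proof. You telescope $\sum_{k=j}^{n}\frac{j-1}{k(k-1)}$ to reduce the condition of Theorem~\ref{thm:CoverEqTriangle_Odd_Basic} to $d \le \frac{j-1}{n}$, choose $j=\lceil dn+1\rceil$, reject $\lceil dn+1\rceil-1$ because it is below $dn+1$, and treat $j=n+1$ as the na\"{\i}ve $T_{n+1}$ covering. The explicit count $n^2+3+2(j-2)$, the check that $j\ge 2$, and the appeal to Theorem~\ref{thm:Generalized_Odd} for a method-independent ``not with'' half are small additions that the paper leaves implicit or defers to Theorem~\ref{thm:CoverEqTriangle_Optimal_Odd}.
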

\begin{proof}
For equilateral triangles, we begin with the inequality constraint for $d$ from Theorem~\ref{thm:CoverEqTriangle_Odd_Basic} and reduce to the simplest terms, as follows:

\begin{eqnarray*}
d \le 1 - \sum_{i=j}^{n} s_i &=& 1 - \left( \sum_{i=j}^{n} \frac{j-1}{i(i-1)} \right)\\
  &=& 1 - \left(j-1\right) \times \sum_{i=j}^{n} \left( \frac{1}{i-1} - \frac{1}{i} \right)\\
  &=& 1 - \left(j-1\right) \times  \left( \frac{1}{j-1} - \frac{1}{n} \right) \\
  &=& \frac{j-1}{n}
\end{eqnarray*}

\noindent Solving for $j$, we find that we must select a row such that:

\begin{eqnarray*}
j \ge dn + 1
\end{eqnarray*}

Since $j$ must be an integer, we use the ceiling function to ensure that down-right slide operations are performed on few enough rows that the sum of the down-right slide values does not exceed the value of $1 - d$. Therefore, we select:

\begin{eqnarray*}
j = \lceil dn + 1 \rceil
\end{eqnarray*}

Next, we observe that the value $\lceil dn + 1 \rceil - 1$ cannot be selected because it is less than $dn + 1$. Furthermore, for the vanishingly few triangles that $j$ is assigned the value $n+1$, the $T_{n+d}$ can be covered with $n^2 + 2n + 1$ homothetic unit triangles because that is the number that are in a $T_{n+1}$ covering produced by reverting to the na\"{\i}ve covering method. The theorem follows for all triangles via affine transformations.
\end{proof}

To facilitate exposition, the term \textbf{\textsc{CoverTriangle} (Full Odd)} refers to the version of \textsc{CoverTriangle} (Basic Odd) that selects $j$ using $j = \lceil dn + 1 \rceil$.

\section{Proof of Optimality for the Odd Cases}\label{sec:OptimalOddCases} In Theorem~\ref{thm:CoverEqTriangle_Optimal_Odd}, we now prove that the upper bounds obtained in Theorem~\ref{thm:Generalized_Odd} are tight because \textsc{CoverTriangle} (Full Odd) meets them.

\begin{theorem}\label{thm:CoverEqTriangle_Optimal_Odd}
\textsc{CoverTriangle} (Full Odd) is an optimal method for covering a $T_{n+d}$ with $n^2 + 2p + 1$ homothetic unit triangles for each $1 \le p < n \in \mathbb{N}$.
\end{theorem}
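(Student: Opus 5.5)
The proof mirrors that of Theorem~\ref{thm:CoverEqTriangle_Optimal_Even}. It has two halves: an impossibility bound, which is already available, and an achievability bound for \textsc{CoverTriangle} (Full Odd) that must meet it exactly. For the impossibility half I invoke Theorem~\ref{thm:Generalized_Odd} directly. For each $1 \le p < n$, no method can cover a $T_{n+d}$ with $n^2 + 2p + 1$ homothetic unit triangles when $d > \frac{p}{n}$. Optimality therefore reduces to showing that \textsc{CoverTriangle} (Full Odd) covers every $T_{n+d}$ with $d \le \frac{p}{n}$ using exactly $n^2 + 2p + 1$ unit triangles.

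For the achievability half, I use the computation in the proof of Theorem~\ref{thm:CoverEqTriangle_SelectingJ_Odd}. There, the telescoping sum of the down-right slide values $\frac{j-1}{i(i-1)}$ shows that \textsc{CoverTriangle} (Basic Odd), started at row $j$, covers a $T_{n+d}$ whenever $d \le \frac{j-1}{n}$. Next I count triangles. Row 1 receives 3 extra unit triangles, rows $2$ through $j-1$ each receive one extra pair, and rows $j$ through $n$ receive none. The total is $n^2 + 3 + 2(j-2) = n^2 + 2(j-1) + 1$. Setting $p = j - 1$ gives $n^2 + 2p + 1$ triangles and the covering condition $d \le \frac{p}{n}$, which is exactly the bound of Theorem~\ref{thm:Generalized_Odd}. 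The constraint $1 < j \le n$ from Theorem~\ref{thm:CoverEqTriangle_Odd_Basic} corresponds to $1 \le p \le n-1$, which is the range in the statement.

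It remains to check that the Full Odd rule $j = \lceil dn + 1 \rceil$ actually selects this $j$, or a smaller one, whenever $d \le \frac{p}{n}$. Since $dn + 1 \le p + 1$, we get $\lceil dn + 1 \rceil \le p + 1$. Hence the number of triangles Full Odd uses is at most $n^2 + 2p + 1$. If Full Odd selects a smaller $j$, it uses fewer triangles, and extra triangles can always be added harmlessly, so the count is still within the allowed $n^2 + 2p + 1$. Conversely, Theorem~\ref{thm:CoverEqTriangle_SelectingJ_Odd} shows that $\lceil dn+1\rceil - 1$ cannot be used. Combined with Theorem~\ref{thm:Generalized_Odd}, this means no smaller odd count is feasible for that $d$. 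So the thresholds coincide, the bounds are tight, and the result extends to all triangle shapes via affine transformations.

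The step I expect to need the most care is the boundary bookkeeping rather than any new geometry. Three things must be confirmed: that $p = j - 1$ is the correct correspondence, given that row 1 absorbs the extra single triangle; that the case $d = \frac{p}{n}$ exactly, where $dn + 1$ is an integer, is handled by the ceiling without overshooting; and that the degenerate value $j = n + 1$ (reversion to the na\"{\i}ve method) falls outside the range $p < n$ and so does not affect the claim.
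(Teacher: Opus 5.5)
Your proposal is correct and is essentially the paper's proof: you match the achievability bound $d \le \frac{j-1}{n}$, taken from the telescoped slide sum in the proof of Theorem~\ref{thm:CoverEqTriangle_SelectingJ_Odd} with $p = j-1$, against the impossibility bound $d > \frac{p}{n}$ of Theorem~\ref{thm:Generalized_Odd}. Your explicit count $n^2 + 3 + 2(j-2) = n^2 + 2p + 1$ and your ceiling check spell out what the paper asserts in one line; your closing claim that no smaller odd count works is not needed for the statement, and for $\lceil dn \rceil = 1$ it would rest on the Baek--Lee $n^2+1$ result, since Theorem~\ref{thm:Generalized_Odd} assumes $p \ge 1$.
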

\begin{proof}
From the proof of Theorem~\ref{thm:CoverEqTriangle_SelectingJ_Odd}, \textsc{CoverTriangle} (Full Odd) selects an integer $j$ with the constraint $j \ge dn + 1$. The row $j$ on which down-right slide operations start is also the first row in which pairs of unit triangles are not added. Therefore, $p = j - 1$. Substituting into the constraint and simplifying, we have $d \le \frac{p}{n}$. Per Theorem~\ref{thm:Generalized_Odd}, if $d > \frac{p}{n}$, it is infeasible to cover a $T_{n+d}$ (of any shape, via affine transformations) with only $n^2 + 2p + 1$ homothetic unit triangles for each $1 \le p < n \in \mathbb{N}$.
\end{proof}

\section{Optimal Consolidated Method}\label{sec:ConsolidatedMethod} The term \textbf{\textsc{CoverTriangle} (Full)} refers to a method that covers a $T_{n+d}$ by selecting and using the one of \textsc{CoverTriangle} (Full Even) and \textsc{CoverTriangle} (Full Odd) that will use the fewest extra homothetic unit triangles. The decision is based on the following calculations:


\begin{eqnarray*}
j_{odd} &=& \lceil dn + 1 \rceil\\
p_{odd} &=& j - 1\\
j_{even} &=& \lfloor (1 - d)(n + 1) \rfloor\\
p_{even} &=& n - j + 1
\end{eqnarray*}


\noindent If $p_{odd} < p_{even}$, then \textsc{CoverTriangle} (Full) selects and uses \textsc{CoverTriangle} (Full Odd) and otherwise \textsc{CoverTriangle} (Full Even) is selected and used.

\begin{theorem}\label{thm:CoverEqTriangle_Optimal_Full}
\textsc{CoverTriangle} (Full) is an optimal method for covering a $T_{n+d}$ with homothetic unit triangles.
\end{theorem}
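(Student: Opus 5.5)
The plan is to reduce optimality of \textsc{CoverTriangle} (Full) to the two optimality results already proved, Theorems~\ref{thm:CoverEqTriangle_Optimal_Even} and~\ref{thm:CoverEqTriangle_Optimal_Odd}, together with their matching infeasibility bounds (Theorem~\ref{thm:Generalized_Odd} and the bound $d > \frac{p}{n+1}$ from the proof of Theorem~\ref{thm:CoverEqTriangle_Optimal_Even}). Fix $n$ and $d\in(0,1)$. The minimum number of homothetic unit triangles covering $T_{n+d}$ has the form $n^2+k$ with $k\ge 2$, by Baek and Lee's $n^2+1$ impossibility result. Moreover $k\le 2n+1$, since the na\"{\i}ve method works. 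So I will write $k=2p$ with $1\le p\le n$, or $k=2p+1$ with $1\le p< n$, or $k=2n+1$. Throughout I will treat $T_{n+d}$ as equilateral and invoke affine transformations at the end, as the paper does.

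First, I will characterize feasibility exactly. For even $k=2p$, covering is possible if and only if $d\le \frac{p}{n+1}$: sufficiency comes from \textsc{CoverTriangle} (Full Even) and necessity from Theorem~\ref{thm:CoverEqTriangle_Optimal_Even}. For odd $k=2p+1$ with $p<n$, covering is possible if and only if $d\le\frac{p}{n}$, by Theorem~\ref{thm:CoverEqTriangle_Optimal_Odd} and Theorem~\ref{thm:Generalized_Odd}. Hence the optimal count is $n^2+\min(k_{even},k_{odd})$, where
\begin{itemize}
\item $k_{even}=2p_{even}$, with $p_{even}$ the least $p\ge1$ such that $p\ge d(n+1)$, i.e.\ $p_{even}=\lceil d(n+1)\rceil$;
\item $k_{odd}=2p_{odd}+1$, with $p_{odd}=\lceil dn\rceil$ when this is less than $n$, and otherwise $k_{odd}=2n+1$ via the na\"{\i}ve method.
\end{itemize}

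Next, I will check that the quantities computed by \textsc{CoverTriangle} (Full) agree with these minima. On the odd side, $j_{odd}-1=\lceil dn+1\rceil-1=\lceil dn\rceil$. On the even side, $n-j_{even}+1 = n+1-\lfloor (1-d)(n+1)\rfloor = \lceil d(n+1)\rceil$, using $\lfloor m-x\rfloor = m-\lceil x\rceil$ for integer $m$. So each sub-method attains the least possible count within its parity class. The edge cases need separate remarks: $j_{even}=0$ (na\"{\i}ve reversion) and $j_{odd}=n+1$ (count $n^2+2n+1$).

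Finally, I will examine the selection rule, which I expect to be the main obstacle. The rule chooses odd when $p_{odd}<p_{even}$, i.e.\ it compares $p$-values rather than the counts $2p_{odd}+1$ and $2p_{even}$. Since $2p_{odd}+1<2p_{even}$ holds if and only if $p_{odd}<p_{even}$ for integers, the comparison is correct. In the tie $p_{odd}=p_{even}$ the even method uses one fewer triangle, and the rule indeed picks even. I also have to note that $p_{odd}\le p_{even}\le p_{odd}+1$, because $dn\le d(n+1)\le dn+1$. Therefore exactly these two outcomes occur, and the rule always returns $\min(k_{even},k_{odd})$. Combined with the necessity bounds, no method can use fewer triangles, which proves the theorem.
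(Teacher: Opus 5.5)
Your proposal is correct and follows essentially the paper's route. Like the paper, you reduce to the parity-wise optimality of Theorems~\ref{thm:CoverEqTriangle_Optimal_Even} and~\ref{thm:CoverEqTriangle_Optimal_Odd} and observe that $p_{odd}<p_{even}$ is equivalent to $2p_{odd}+1<2p_{even}$, with ties going to the even method. You also supply details the paper leaves implicit: the identifications $p_{even}=\lceil d(n+1)\rceil$ and $p_{odd}=\lceil dn\rceil$, and the na\"{\i}ve-reversion edge cases.
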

\begin{proof}
By Theorems~\ref{thm:CoverEqTriangle_Optimal_Even} and~\ref{thm:CoverEqTriangle_Optimal_Odd}, $p_{even}$ and $p_{odd}$ are the least number of, respectively, extra pairs and one plus extra pairs of homothetic unit triangles more than $n^2$ that can cover a given $T_{n+d}$. When $p_{odd} < p_{even}$, then $2p_{odd} + 1 < 2p_{even}$. Otherwise, $p_{odd}$ indicates the use of at least as many pairs as $p_{even}$, so $2p_{odd}+1 > 2p_{even}$.
\end{proof}

\section{Conclusion and Future Work}\label{sec:ConclusionFutureWork}

In this paper, we have presented a new method for covering a given triangle $T_{n+d}$, with a selected side length of $n \in \mathbb{N}$ plus $d \in (0, 1)$, using $n^2+k$ homothethic unit triangles. By this new method, we proved that we have answered in the affirmative an open question from~\cite{BoyerAlwaysCoverTriangles} about whether there was a better triangle covering method than the one in that paper. Also, in~\cite{BaekLee}, Baek and Lee developed a proof technique for analyzing triangle coverings, and they used it to establish tight upper bounds for the cases $n^2+1$, $n^2+2$, and $n^2+3$. They then opened the question of finding tight upper bounds for $n^2+k$ for $k > 3$. We answered this open question by extending their proof technique to the higher cases and then showing that the upper bounds were tight based on the new method meeting them. The analysis of how to arrange the triangles to cover a $T_{n+d}$ was based on multiples of the reciprocals of the \emph{pronic numbers}, including for one case the fractional reciprocals of the \emph{triangular numbers}.

Since the problem of optimally covering large triangles with homothetic unit triangles has been fully solved in this paper, future work should focus on making an improvement by lifting the restriction of using all homothetic unit triangles. For example, we know that an improvement is possible if the restriction is lifted even for just one unit triangle. Namely, Baek and Lee used their proof technique to prove that there are no $T_{n+d}$ that can be covered with only $n^2+1$ homothetic unit triangles~\cite{BaekLee}, yet in~\cite{BoyerNonequilateralTriangles} a method is shown that can cover some $T_{n+d}$ using $n^2$ homothetic unit triangles plus 1 unit triangle that is not homothetic to but is similar to the $T_{n+d}$ and is even arranged with one leg parallel to the $x$-axis (a requirement in the Baek and Lee proof technique). Furthermore, the method in~\cite{BoyerNonequilateralTriangles} was shown to work for all non-equilateral triangles, but if the requirement of similarity of the unit triangles is replaced by the requirement that all unit triangles have the same area, then there are even some large equilateral triangles than can be covered with $n^2$ homothetic unit equilateral triangles plus one non-equilateral triangle having the same area as a unit equilateral triangle. A new proof technique, or a further extension of the Baek-Lee proof technique, is required to account for the effect of using one or more non-homothetic unit triangles to help cover a large triangle.

\clearpage

\bibliographystyle{plainurl}
\bibliography{OptimallyCoveringLargeTriangles.bib}

\end{document}